\newtheorem{theorem}{Theorem}
\begin{document}
\title{Stability and causality criteria in linear mode analysis: \\Stability
means causality}
\author{Dong-Lin Wang}
\email{donglinwang@mail.ustc.edu.cn}

\affiliation{Department of Modern Physics, University of Science and Technology
of China, Anhui 230026, China}
\author{Shi Pu}
\email{shipu@ustc.edu.cn}

\affiliation{Department of Modern Physics, University of Science and Technology
of China, Anhui 230026, China}
\begin{abstract}
Causality and stability are fundamental requirements for the differential
equations describing predictable relativistic many-body systems. In
this work, we investigate the stability and causality criteria in
linear mode analysis. We discuss the updated stability criterion in
$3+1$-dimensional systems and introduce the improved sufficient criterion
for causality. Our findings clearly demonstrate that stability implies
causality in linear mode analysis. Furthermore, based on the theorems
present in this work, we conclude that if updated stability criterion
and improved causality criterion are fulfilled in one inertial frame
of reference (IFR), they hold for all IFR.
\end{abstract}
\maketitle

\paragraph*{Introduction ---}


In modern physics, the space-time evolution of predictable relativistic
many-body systems is typically described using differential equations.
These differential equations must adhere to the principle of causality
as required by the theory of relativity \citep{Wald:1984rg}. Causality
means that signals or information cannot propagate faster than the
speed of light. While the physical interpretation of causality is
well understood, the establishment of well-defined criteria for causality
in various relativistic systems is still rarely discussed. In the
context of quantum field theory, it translates into the requirement
that the commutators of local operators vanish outside the light cone
\citep{Gell-Mann:1954ttj}. For many macroscopic relativistic many-body
systems, it is still challenging to derive the general sufficient
and necessary criteria for causality, based on our current understanding
(also see Refs. \citep{Bemfica:2020zjp,Gavassino:2021kjm,Heller:2022ejw}
and the references therein, for recent developments for the causality
criteria in relativistic hydrodynamics).

Another essential prerequisite is stability, characterized by minor
perturbations in a state gradually diminishing over time. Typically,
one can find solutions to the differential equations near equilibrium
or eigenstates. Stability ensures that perturbations near equilibrium
or eigenstates return to their respective states. Noteworthy examples
for stability include circular orbits in the classical central force
problem \citep{landau1976:me}, particular solutions in general relativity
\citep{Abbott:1981ff}, equilibrium states in thermodynamics \citep{Landau:1980mil},
and quantum states in quantum mechanics \citep{Peres:1984pd,Bender:1998ke},
among others.


Analyzing stability and causality helps determine the physical viability
of a theory and provides additional constraints on its parameters.
One famous example is relativistic hydrodynamics. The conventional
relativistic hydrodynamics up to the first order in the gradient expansion
is found to be acausal and unstable \citep{Hiscock:1985zz}. It has,
therefore, been extended to second-order formalisms such as the M\"uller-Israel-Stewart
(MIS) theory \citep{Israel:1979}, Baier-Romatschke-Son-Starinets-Stephanov
theory \citep{Baier:2007ix}, and Denicol-Niemi-Molnar-Rischke theory
\citep{Denicol:2012cn}. In addition, a stable and causal generalized
first-order formalism, known as the Bemfica-Disconzi-Noronha-Kovtun
theory has also been established \citep{Bemfica:2017wps,Kovtun:2019hdm}.
Comprehensive discussions of the causality and stability conditions
for these theories can be found in Refs. \citep{Floerchinger:2017cii,Bemfica:2020zjp}
and the references therein. Moreover, recent interest has emerged
in the stability and causality of effective field theory for hydrodynamics
\citep{Mullins:2023tjg}.


Physically, causality and stability are intertwined. Causality also
implies that all physical observables must reside within the light
cone. Assume that there exists a stable mode in one IFR, in which
perturbation decays when $t-t_{0}>0$ with $t_{0}$ being initial
time. If the perturbation propagates out of the light cone, we will
observe that $t^{\prime}-t_{0}^{\prime}<0$ in another IFR due to
a Lorentz transformation. This means that the perturbation grows with
time in the second IFR. Therefore, it is concluded that acausality
leads to an unstable mode \citep{Gavassino:2021owo}. The above argument
reveals the connection between acausality and unstable modes; the
relationship between the causality and stability is worth being discussed
deeply. Although the causality and stability conditions overlap in
some theories \cite{Pu:2009fj}, the criteria for causality and stability
appear significantly different and seemingly independent. A natural
question arises as to how we can establish a connection between the
stability and causality criteria.


Because of the complexity involved, analyzing the causality and stability
of complete differential equations can be challenging. One common
approach is to employ linear mode analysis. Despite the significant
progress made in recent works \citep{Bemfica:2020zjp,Gavassino:2021owo}
regarding stability-causality relations, it remains unclear how to
deduce these remarkable findings directly from the basic stability
and causality criteria in linear mode analysis.

Let us start the story from a general overview of linear mode analysis.
Following that, we discuss the updated stability criteria and introduce
improved causality criteria. Finally, we uncover the connection between
stability and causality in linear mode analysis based on our findings.

\vspace{1em}

\paragraph*{Linear mode analysis and conventional causality and stability criteria
---}

Linear mode analysis is a widely used method for investigating the
stability and causality of differential equations. Here, we take the
relativistic hydrodynamics as an illustrative example to introduce
the basic idea for linear mode analysis. The main differential equations
for relativistic hydrodynamics are the energy-momentum and current
conservation equations, expanded in the gradient expansion. In the
linear mode analysis, one considers the perturbations of independent
macroscopic variables within the system, e.g. energy density $\delta e$,
number density $\delta\rho$, etc., near the equilibrium. Generally,
the hydrodynamic conservation equations for $n$ independent perturbations,
$\varphi(t,\vec{x})=(\delta e,\delta\rho,...)^{\mathrm{T}}$, on top
of the irrotational equilibrium state can be formulated as linear
partial differential equations \citep{Hiscock:1985zz}
\begin{equation}
\partial_{t}\varphi(t,\vec{x})+\mathrm{M}(-i\partial)\varphi(t,\vec{x})=0,\label{eq:StandardFormFirstOrder}
\end{equation}
where matrix $\mathrm{M}(-i\partial)$ is a polynomial of the space
derivative $\partial_{i}$, $\mathrm{M}(-i\partial)=\sum_{j=0}^{N}\mathrm{M}_{(j)}^{i_{1},i_{2},...,i_{j}}\partial_{i_{1}}\partial_{i_{2}}...\partial_{i_{j}}$
with $N\geq0$ being a finite integer and $\mathrm{M}_{(j)}^{i_{1},i_{2},...,i_{j}}$
being a constant $n\times n$ matrix. For simplicity, plane-wave type
perturbations are often adopted in the linear mode analysis, 
\begin{equation}
\varphi=\varphi_{0}e^{-i\omega t+i\vec{k}\cdot\vec{x}},\quad\varphi_{0}=\textrm{const}.
\end{equation}
Subsequently, the nonzero solutions to the differential equations
(\ref{eq:StandardFormFirstOrder}) exist if and only if 
\begin{equation}
0=\mathcal{P}(\omega,\vec{k})\equiv\det[\omega+i\mathrm{M}(\vec{k})].\label{eq:PolynomialP}
\end{equation}
The eigenvalues of $-i\mathrm{M}(\vec{k})$ yield the dispersion relations,
denoted as $\omega_{i}=\omega_{i}(\vec{k})$, $i=1,2,...,n$. Naturally,
the dispersion relations should be constrained by physical requirements:
\begin{itemize}
\item Stability: All perturbations $|\varphi(0,\vec{x})|$ cannot grow exponentially
with time. It gives the conventional stability criterion,  
\begin{equation}
\mathrm{Im}\ \omega\leq0,\qquad\textrm{for }\vec{k}\in\mathbb{R}^{3}.\label{eq:ineq_orginal_01}
\end{equation}
\item Causality: The influence of $\varphi(0,\vec{x})$ propagates no faster
than the speed of light. A widely accepted asymptotic causality criterion
is \citep{Krotscheck1978CausalityC}
\begin{equation}
\lim_{|\vec{k}|\rightarrow+\infty}\left\{ \frac{|\mathrm{Re}\ \omega|}{|\vec{k}|}\leq1,\textrm{ |\ensuremath{\omega}/\ensuremath{\vec{k}}| is bounded}\right\} ,\;\vec{k}\in\mathbb{R}^{3}.\label{eq:WrongC}
\end{equation}
\end{itemize}
Here, we define the vector norm $|V|\equiv\left(\sum_{i=1}^{j}V_{i}V_{i}^{*}\right)^{1/2}$
for any complex vector $V=(V_{1},V_{2},...,V_{j})$ and adopt notation
$\omega=\omega_{i}(\vec{k})$, $i=1,2,...,n$ for brevity moving forward.
The causality and stability criteria mentioned above are intuitive,
but they are not flawless.
\begin{itemize}
\item A practical challenge arises: the conventional causality and stability
criteria (\ref{eq:ineq_orginal_01}) and (\ref{eq:WrongC}) depend
on IFR. Commonly, the causality and stability conditions are first
derived from the criteria (\ref{eq:ineq_orginal_01}) and (\ref{eq:WrongC})
in the rest frame. Then, the verification of these criteria in other
IFR follows. However, this process of examining conditions across
different frames is frequently burdensome.
\item A concern arises: the conventional causality criterion (\ref{eq:WrongC})
proves to be inadequate in guaranteeing causality \citep{Gavassino:2023mad}.
\item A question arises: does stability imply causality? Furthermore, what
constitutes the relationship between the stability and the causality
criteria?
\end{itemize}
The aim of this work is to provide improved causality criteria, simplify
the steps for Lorentz transformation, bridge the gap between stability
and causality criteria, and reveal the profound stability-causality
relations. We propose an improved sufficient criterion for causality.
This, combined with new insights into the covariance of the stability
and causality criteria, allows for the immediate derivation of significant
stability-causality relations in linear mode analysis. Before further
discussion, we emphasize that in the current study we concentrate
on the systems with well-defined inertial frames. When the system
is far from equilibrium or incorporates certain quantum effects \cite{Arnold:2014jva},
the local rest frames may become ill-defined and the following criteria
may be inapplicable.


\vspace{1em}

\paragraph*{Updated stability criterion for a $3+1$-dimensional relativistic
system ---}

The updated stability criterion for a $3+1$-dimensional relativistic
system is
\begin{equation}
\mathrm{Im}\ \omega\leq|\mathrm{Im}\ \vec{k}|,\quad\text{for }\vec{k}\in\mathbb{C}^{3}.\label{eq:NewBound}
\end{equation}
The inequality (\ref{eq:NewBound}) is introduced within a $1+1$
dimensional system by imposing the causality on the retarded two-point
function in stable systems, as proposed in Ref. \citep{Heller:2022ejw},
and is subsequently proved as the necessary condition for stability
across all IFR \citep{Gavassino:2023myj}.

The inequality (\ref{eq:NewBound}) can be proven by employing a contradiction
approach. Assume that $\mathrm{Im}\ \omega>|\mathrm{Im}\ \vec{k}|$
holds true for a specific $\vec{k}$ in one IFR, denoted as $K$,
even while maintaining system stability. We perform a special Lorentz
transformation from frame $K$ to another IFR $K^{\prime}$, characterized
by a velocity $\vec{v}=(\mathrm{Im}\ \vec{k})/(\mathrm{Im\ }\omega)$
relative to $K$. In frame $K^{\prime}$, as $(\omega^{\prime},\vec{k}^{\prime})^{\mathrm{T}}=\Lambda(\vec{v})\cdot(\omega,\vec{k})^{\mathrm{T}}$
with $\Lambda(\vec{v})$ being the Lorentz transformation matrix,
we observe that $\mathrm{Im\ }\vec{k}^{\prime}=0$ and $\mathrm{Im}\ \omega^{\prime}=\gamma^{-1}\mathrm{Im}\ \omega>0$
with $\gamma$ being Lorentz factor. This implies the existence of
an unstable mode, which violates the stability requirement (\ref{eq:ineq_orginal_01}),
within the frame $K^{\prime}$. This indicates that the assumption
$\mathrm{Im}\ \omega>|\mathrm{Im}\ \vec{k}|$, made in any IFR, can
render the system unstable.

Furthermore, the subsequent theorem significantly streamlines the
intricate calculations associated with transformations between distinct
IFR in linear mode analysis.

\begin{theorem} \label{TheoremS1}

The stability criterion (\ref{eq:NewBound}) holds true across all
IFR if it is satisfied in a single IFR.

\end{theorem}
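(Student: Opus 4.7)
The plan is to reformulate the inequality (\ref{eq:NewBound}) as a Lorentz-invariant condition on the imaginary part of the wave four-vector $k^{\mu}=(\omega,\vec{k})$, and then use the covariance of the dispersion polynomial $\mathcal{P}$ to transport the condition between frames.

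First I would separate real and imaginary parts. Writing $\omega=\omega_{R}+i\omega_{I}$ and $\vec{k}=\vec{k}_{R}+i\vec{k}_{I}$, I note that $(\omega_{I},\vec{k}_{I})$ is itself a real four-tuple. Because any Lorentz matrix $\Lambda$ has real entries, taking imaginary parts commutes with it, so $(\omega_{I}',\vec{k}_{I}')=\Lambda(\omega_{I},\vec{k}_{I})$ under the boost relating $K$ and $K'$. A short case analysis then shows that (\ref{eq:NewBound}) is equivalent to the statement that $(\omega_{I},\vec{k}_{I})$ does not lie strictly inside the future light cone: either $\omega_{I}\leq 0$, or $\omega_{I}^{2}-|\vec{k}_{I}|^{2}\leq 0$.

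Second, I would observe that this property is preserved by every proper orthochronous Lorentz transformation: the Minkowski norm $\omega_{I}^{2}-|\vec{k}_{I}|^{2}$ is invariant, and for timelike vectors the sign of the time component is invariant. Consequently, if $(\omega_{I},\vec{k}_{I})$ is not strictly future-timelike in $K$, its image $(\omega_{I}',\vec{k}_{I}')$ is not strictly future-timelike in $K'$ either, which, translated back, is exactly $\omega_{I}'\leq|\vec{k}_{I}'|$.

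Third, I would invoke the Lorentz covariance of the linearized system underlying (\ref{eq:StandardFormFirstOrder}). Because the conservation laws that generate this equation are manifestly covariant, the zero set of $\mathcal{P}(\omega,\vec{k})$ in $K$ and the zero set of $\mathcal{P}'(\omega',\vec{k}')$ in $K'$ are related bijectively by $\Lambda$. So every root $(\omega',\vec{k}')\in\mathbb{C}^{4}$ of the $K'$-dispersion relation is the $\Lambda$-image of some root $(\omega,\vec{k})$ of the $K$-dispersion relation, to which the hypothesis applies; combined with the two previous steps, this yields (\ref{eq:NewBound}) in $K'$.

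The main obstacle I anticipate is that (\ref{eq:NewBound}) is quantified over all $\vec{k}'\in\mathbb{C}^{3}$, whereas a Lorentz boost mixes $\omega$ with $\vec{k}$, so a fixed $\vec{k}'$ does not pull back to a fixed $\vec{k}$. I would sidestep this by reasoning on the full four-dimensional root set $\{k^{\mu}:\mathcal{P}(k)=0\}$ rather than fiber-by-fiber at constant spatial momentum: the hypothesis is exactly that the criterion holds on the whole $K$-root set, the $K'$-root set is its $\Lambda$-image, and the criterion has just been shown to be pointwise $\Lambda$-invariant. A minor secondary point to address is the restriction to proper orthochronous $\Lambda$, which is automatic since IFR changes are realized by such transformations.
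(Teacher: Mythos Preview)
Your proposal is correct and is essentially the same argument as the paper's, recast in geometric language: the paper proves by contradiction that a violation $\mathrm{Im}\,\omega''>|\mathrm{Im}\,\vec{k}''|$ in $K''$ forces, via the invariance of $(\mathrm{Im}\,\omega)^{2}-|\mathrm{Im}\,\vec{k}|^{2}$ together with $\mathrm{Im}\,\omega'=\gamma(\mathrm{Im}\,\omega''-\vec{v}\cdot\mathrm{Im}\,\vec{k}'')>0$, a violation in $K'$; you phrase the identical content as ``$(\omega_{I},\vec{k}_{I})$ not strictly future-timelike is preserved by proper orthochronous boosts,'' which the paper itself remarks on immediately after its proof. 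Your explicit articulation of step~3 (that the root sets in different IFR are related by $\Lambda$, so one must argue on the full four-dimensional zero locus rather than fiberwise at fixed $\vec{k}$) is a point the paper uses tacitly, and making it explicit is a small improvement in clarity.
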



\begin{proof}

It can also be proven by employing a contradiction approach. Assume
that  $\mathrm{Im}\ \omega^{\prime}\leq|\mathrm{Im}\ \vec{k}^{\prime}|$
holds in a IFR $K^{\prime}$. Let us suppose that this inequality
is violated in another IFR $K^{\prime\prime}$, i.e., there exists
$\vec{k}^{\prime\prime}\in\mathbb{C}^{3}$ within frame $K^{\prime\prime}$
for which $\mathrm{Im}\ \omega^{\prime\prime}>|\mathrm{Im}\ \vec{k}^{\prime\prime}|$.
By the Lorentz transformation, $(\omega^{\prime\prime},\vec{k}^{\prime\prime})^{\mathrm{T}}=\Lambda(\vec{v})\cdot(\omega^{\prime},\vec{k}^{\prime})^{\mathrm{T}}$
where $\vec{v}$ represents the velocity of frame $K^{\prime\prime}$
relative to $K^{\prime}$, we find  
\begin{equation}
|\mathrm{Im}\ \omega^{\prime}|^{2}-|\mathrm{Im}\ \vec{k}^{\prime}|^{2}=|\mathrm{Im}\ \omega^{\prime\prime}|^{2}-|\mathrm{Im}\ \vec{k}^{\prime\prime}|^{2}>0,\label{eq:InvariantEq}
\end{equation}
and $\mathrm{Im}\ \omega^{\prime}=\gamma(\mathrm{Im}\ \omega^{\prime\prime}-\vec{v}\cdot\mathrm{Im}\ \vec{k}^{\prime\prime})>0$.
Thus, we arrive at $\mathrm{Im}\ \omega^{\prime}>|\mathrm{Im}\ \vec{k}^{\prime}|$,
which contradicts the original assumption. It means that frame $K^{\prime\prime}$
does not exist. Therefore, the inequality (\ref{eq:NewBound}) holds
across all IFR. \end{proof}

Theorem \ref{TheoremS1} can also be intuitively comprehended through
a geometric lens. Specifically, the vector $\mathrm{Im}(\omega,\vec{k})$
does not lay inside the future light cone, as discussed in Ref. \citep{Gavassino:2023myj}.


\vspace{1em}

\paragraph*{Improved sufficient criterion for causality ---}

Drawing inspiration from the stability criterion (\ref{eq:NewBound}),
we provide a new sufficient criterion for causality based on the theorem
below,

\begin{theorem} \label{TheoremSuffCausa}

Suppose that the initial data $\varphi(0,\vec{x})$ for differential
equations (\ref{eq:StandardFormFirstOrder}) is smooth with respect
to $\vec{x}$, and the volume of the support of $\varphi(0,\vec{x})$
is both finite and nonvanishing. If two constants $R>0$ and $b\in\mathbb{R}$
exist such that 
\begin{equation}
\mathrm{Im}\ \omega\leq|\mathrm{Im}\ \vec{k}|+b,\ \textrm{for }|\vec{k}|>R,\label{eq:SufCausality}
\end{equation}
then the influence of the initial data propagates with subluminal
speed.

\end{theorem}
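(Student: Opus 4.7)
The plan is a Fourier--Paley--Wiener argument. I would first Fourier-transform in space to express the solution as
\[
\varphi(t,\vec{x}) = \frac{1}{(2\pi)^{3}}\int_{\mathbb{R}^{3}} e^{i\vec{k}\cdot\vec{x}}\, e^{-t\mathrm{M}(\vec{k})}\,\widehat{\varphi}_{0}(\vec{k})\,d^{3}k,
\]
where $\widehat{\varphi}_{0}(\vec{k})\equiv\int d^{3}x\, e^{-i\vec{k}\cdot\vec{x}}\varphi(0,\vec{x})$ denotes the spatial Fourier transform of the initial data. Smoothness and finite (hence compact) support of $\varphi(0,\vec{x})$ inside some ball $B(\vec{x}_{0},r_{0})$ let me invoke the Paley--Wiener theorem: $\widehat{\varphi}_{0}$ extends to an entire function on $\mathbb{C}^{3}$ obeying $|\widehat{\varphi}_{0}(\vec{k})| \le C_{N}(1+|\vec{k}|)^{-N}\exp(r_{0}|\mathrm{Im}\,\vec{k}|+\mathrm{Im}\,\vec{k}\cdot\vec{x}_{0})$ for every positive integer $N$.

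Next I would fix a spacetime point with $|\vec{x}-\vec{x}_{0}|>r_{0}+t$ and deform the $\mathbb{R}^{3}$ integration contour to the shifted hyperplane $\mathrm{Re}\,\vec{k}+is\hat{n}$, where $\hat{n}=(\vec{x}-\vec{x}_{0})/|\vec{x}-\vec{x}_{0}|$, eventually sending $s\to+\infty$. Before deforming, I enlarge $b$ if necessary so that (\ref{eq:SufCausality}) holds on all of $\mathbb{C}^{3}$; this is harmless because the roots of the polynomial $\mathcal{P}(\omega,\vec{k})$ depend continuously on $\vec{k}$ so $\mathrm{Im}\,\omega$ is uniformly bounded on the compact complement $|\vec{k}|\le R$. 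Combined with the propagator estimate $\|e^{-t\mathrm{M}(\vec{k})}\|\le Q(|\vec{k}|,t)\exp[t(|\mathrm{Im}\,\vec{k}|+b)]$ for some polynomial $Q$ (discussed below), and with $|e^{i\vec{k}\cdot\vec{x}}|=\exp(-s\hat{n}\cdot\vec{x})$, the integrand on the shifted contour is bounded by
\[
C_{N}'(1+|\mathrm{Re}\,\vec{k}|)^{-N+\deg Q}\, e^{tb}\,\exp[-s(|\vec{x}-\vec{x}_{0}|-r_{0}-t)].
\]
For $N$ large the $\mathrm{Re}\,\vec{k}$-integral is absolutely convergent, and the $s$-dependent exponential forces the shifted integral to zero. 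The side pieces connecting the real and shifted hyperplanes vanish by the same polynomial decay in $\mathrm{Re}\,\vec{k}$, so Cauchy's theorem yields $\varphi(t,\vec{x})=0$, i.e.\ subluminal propagation.

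The main obstacle I expect is establishing the propagator estimate $\|e^{-t\mathrm{M}(\vec{k})}\|\le Q(|\vec{k}|,t)\exp[t(|\mathrm{Im}\,\vec{k}|+b)]$. Since $\mathrm{M}(\vec{k})$ is a matrix polynomial in $\vec{k}$, its eigenvalues $-i\omega_{j}(\vec{k})$ are algebraic in $\vec{k}$ with possible branch-point singularities at mode crossings, so a naive mode-by-mode diagonalization is delicate there. I would bypass this by working at the matrix level via the Dunford--Taylor representation $e^{-t\mathrm{M}(\vec{k})}=(2\pi i)^{-1}\oint_{\Gamma}e^{-tz}(z-\mathrm{M}(\vec{k}))^{-1}dz$ with $\Gamma$ a large contour enclosing the spectrum, or equivalently the Lagrange--Sylvester/Cayley--Hamilton formula expressing $e^{-t\mathrm{M}(\vec{k})}$ as a polynomial of degree at most $n-1$ in $\mathrm{M}(\vec{k})$ with coefficients determined by the $\omega_{j}(\vec{k})$. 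Both routes produce only polynomial-in-$|\vec{k}|$ and polynomial-in-$t$ prefactors multiplying the spectral-abscissa factor $\exp[t(|\mathrm{Im}\,\vec{k}|+b)]$, and such prefactors are harmlessly absorbed by the Schwartz-type decay of $\widehat{\varphi}_{0}$ for sufficiently large $N$, closing the estimate.
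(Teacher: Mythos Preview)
Your proposal is correct and follows essentially the same route as the paper: Fourier representation of the solution, Paley--Wiener bound on $\widehat{\varphi}_{0}$, extension of the spectral bound from $|\vec{k}|>R$ to all of $\mathbb{C}^{3}$ by compactness, a contour shift into the imaginary direction toward the observation point, and a matrix-level bound on $\|e^{-t\mathrm{M}(\vec{k})}\|$ obtained via the Cauchy/Dunford--Taylor integral (the paper's inequality (\ref{eq:Estimate1}), proved in the Supplemental Material by exactly this resolvent-contour method). The only cosmetic differences are that the paper centers the support at the origin and shifts by $ir\vec{x}$ rather than $is\hat{n}$, and it packages the degeneracy issue with an $\epsilon$-fattened contour giving an $\epsilon^{1-n}$ prefactor instead of the polynomial-in-$t$ factors you mention; both are harmless against the Schwartz decay of $\widehat{\varphi}_{0}$.
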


Before proving the aforementioned theorem, we intend to present a
simplified equivalent version of criterion (\ref{eq:SufCausality}),
denoted as follows:\emph{ If condition (\ref{eq:SufCausality}) is
fulfilled, then there exists an additional real constant $b^{\prime}\geq b$
such that }
\begin{equation}
\mathrm{Im}\ \omega\leq|\mathrm{Im}\ \vec{k}|+b^{\prime},\ \textrm{for}\ \vec{k}\in\mathbb{C}^{3}.\label{eq:OmegaBound}
\end{equation}
Let us deduce the inequality (\ref{eq:OmegaBound}) from inequality
(\ref{eq:SufCausality}). We notice that the $\omega=\omega(\vec{k})$
must be finite for any finite $|\vec{k}|$. This is because the dispersion
relation from (\ref{eq:PolynomialP}), represented by $\mathcal{P}(\omega,\vec{k})=\omega^{n}+\sum_{m=0}^{n-1}a_{m}(\vec{k})\omega^{m}=0$
with $a_{m}(\vec{k})$ being a polynomial of $\vec{k}$, will not
yield an infinite $\omega$ for finite $\vec{k}$. Consequently, there
exists a sufficiently large positive constant $b^{\prime}$ such that
$b^{\prime}\geq b$ and $\mathrm{Im}\ \omega\leq|\mathrm{Im}\ \vec{k}|+b^{\prime}$
for $|\vec{k}|\leq R$. Here we have implicitly assumed that the
perturbed equations can be written as the form (\ref{eq:StandardFormFirstOrder}),
which has already ruled out some acausal equations, e.g., the Benjamin-Bona-Mahony
equation \citep{Gavassino:2023mad}. Now, let us direct our attention
toward proving the causality criterion (\ref{eq:SufCausality}) with
the help of its simplified equivalent version (\ref{eq:OmegaBound}).

\begin{proof}

Suppose that the initial data $\varphi(0,\vec{x})$, possessing finite
volume, are enclosed within the closed ball centered at $\vec{x}=0$
and having a radius of $L>0$.  By employing the general solution
of (\ref{eq:StandardFormFirstOrder}), we can express $\varphi(t,\vec{x})$
with $t>0$ as follows:
\begin{equation}
\varphi(t,\vec{x})=\int_{\mathbb{R}^{3}}\frac{d^{3}k}{(2\pi)^{3}}e^{i\vec{k}\cdot\vec{x}}e^{-\mathrm{M}(\vec{k})t}\tilde{\varphi}(0,\vec{k}),\label{eq:Phitx}
\end{equation}
where $\tilde{\varphi}(t,\vec{k})\equiv\int d^{3}xe^{-i\vec{k}\cdot\vec{x}}\varphi(t,\vec{x})$
represents the Fourier transformation of $\varphi(t,\vec{x})$. 
In this case, the causality means the perturbation $\varphi(t,\vec{x})$
cannot persist beyond the region $|\vec{x}|>L+ct$ with any finite
$t>0$, where $c=1$ is the speed of light \citep{Wald:1984rg}. Hence,
our task is to demonstrate that $\varphi(t,\vec{x})=0$ within the
region $|\vec{x}|>L+t$ with any finite $t>0$, given that the dispersion
relations adhere to the inequality (\ref{eq:SufCausality}).

The key lies in employing the following two inequalities:
\begin{eqnarray}
||e^{-\mathrm{M}(\vec{k})t}|| & \leq & \frac{a_{1}}{\epsilon^{n-1}}(1+|\vec{k}|^{N(n-1)})e^{[\lambda(\vec{k})+\epsilon]t},\label{eq:Estimate1}\\
|\tilde{\varphi}(0,\vec{k})| & \leq & \frac{a_{2}}{1+|\vec{k}|^{N(n-1)+4}}e^{L|\mathrm{Im}\ \vec{k}|},\label{eq:Estimate2}
\end{eqnarray}
where $\lambda(\vec{k})\equiv\max_{i}\{\mathrm{Im}\ \omega_{i}(\vec{k})\}$,
$\epsilon\in(0,1)$, and $a_{1},a_{2}$ are independent of $t$, $\epsilon$,
and $\vec{k}$. Here, the norm $||\cdot||$ is the spectral norm of
matrix \citep{Datta2010:NLinear}. The inequality (\ref{eq:Estimate2})
can be obtained by performing integration by parts $N(n-1)+4$ times
in $\tilde{\varphi}(t,\vec{k})=\int d^{3}xe^{-i\vec{k}\cdot\vec{x}}\varphi(t,\vec{x})$
\citep{Lax:2006Hyperbolic}.  Proof of inequality (\ref{eq:Estimate1})
utilizes the Cauchy integral formula for matrices \citep{Evans2010PartialDE},
and the details can be found in the Supplemental Material. 

\begin{figure}[t]
\noindent \begin{centering}
\includegraphics[scale=0.4]{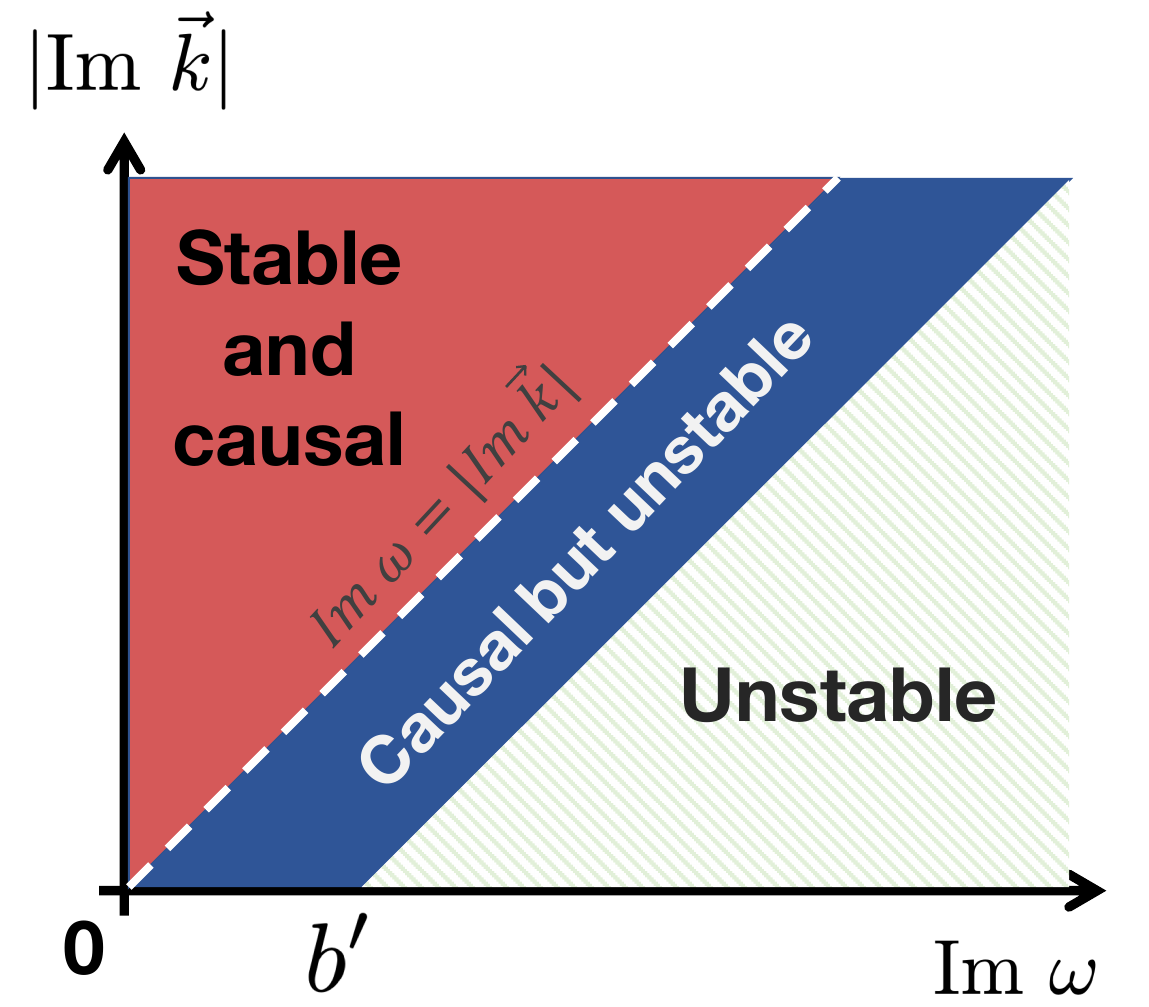}
\par\end{centering}
\caption{Illustration for stability and causality criteria. \label{fig:Illustration-for-stability}}
\end{figure}

Assisted by these two inequalities, we estimate the integral (\ref{eq:Phitx})
for $|\vec{x}|>L+t$, employing a method akin to the proof outlined
in Theorem 3.1 of Ref. \citep{Lax:2006Hyperbolic}. Upon the fulfillment
of the inequality (\ref{eq:SufCausality}) or its simplified equivalent
version (\ref{eq:OmegaBound}), we have $\lambda(\vec{k})\leq|\mathrm{Im}\ \vec{k}|+b^{\prime}$.
Given that the integrand in (\ref{eq:Phitx}) is an entire analytic
function, we change the path $\vec{k}\rightarrow\vec{k}+ir\vec{x}$
and apply Cauchy's theorem, 
\begin{eqnarray}
\left|\varphi(t,\vec{x})\right| & = & \left|\int_{\mathbb{R}^{3}}d^{3}ke^{i(\vec{k}+ir\vec{x})\cdot\vec{x}}e^{-\mathrm{M}(\vec{k}+ir\vec{x})t}\tilde{\varphi}(0,\vec{k}+ir\vec{x})\right|\nonumber \\
 & \propto & e^{-r|\vec{x}|(|\vec{x}|-t-L)+(b^{\prime}+\epsilon)t}\rightarrow0,\label{eq:prove_02}
\end{eqnarray}
as $r\rightarrow+\infty$, where we have used (\ref{eq:Estimate1}),
(\ref{eq:Estimate2}), and the inequality $|e^{-\mathrm{M}(\vec{k})t}\tilde{\varphi}(0,\vec{k})|\leq||e^{-\mathrm{M}(\vec{k})t}||\cdot|\tilde{\varphi}(0,\vec{k})|$
\citep{Datta2010:NLinear}. Hence, for any given finite $t>0$, we
deduce $\varphi(t,\vec{x})=0$ for $|\vec{x}|>L+t$ in accordance
with (\ref{eq:prove_02}), thereby affirming that the perturbation
signal propagates at a speed no greater than that of light. \end{proof}

Analogous to the stability criterion, the subsequent theorem facilitates
the extension of the causality criterion (\ref{eq:SufCausality})
or (\ref{eq:OmegaBound}) from one IFR to all IFR.

\begin{theorem} \label{TheoremCausOtherF}

The causality criterion (\ref{eq:SufCausality}) or (\ref{eq:OmegaBound})
holds true across all IFR if it is fulfilled in a single IFR.

\end{theorem}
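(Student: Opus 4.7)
The plan is to mirror the geometric strategy behind Theorem \ref{TheoremS1}, adapted to accommodate the additional constant shift $b'$. Since the paper has already established that (\ref{eq:SufCausality}) is equivalent to the global bound (\ref{eq:OmegaBound}), I would work directly with the latter. The key observation is that the inequality $\mathrm{Im}\ \omega' \leq |\mathrm{Im}\ \vec{k}'| + b'$ can be recast as the geometric statement that the real 4-vector $Y' \equiv (\mathrm{Im}\ \omega' - b',\, \mathrm{Im}\ \vec{k}')$ does not lie strictly inside the future light cone. One checks immediately that this reformulation is correct for every sign of $\mathrm{Im}\ \omega' - b'$, so it captures the hypothesis faithfully for all $\vec{k}' \in \mathbb{C}^3$.

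Next I would invoke the Lorentz invariance of the future light cone. Orthochronous Lorentz transformations map the future light cone onto itself, so the boosted vector $Y'' \equiv \Lambda(\vec{v})\, Y'$ must also fail to be strictly future timelike in any other IFR $K''$. By linearity, $Y'' = (\mathrm{Im}\ \omega'',\, \mathrm{Im}\ \vec{k}'') - \Lambda(\vec{v})(b',\vec{0})$, and an explicit boost of the purely temporal vector $(b',\vec{0})$ produces a shifted light-cone inequality in $K''$ whose apex is displaced both in time and in space relative to the origin.

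The final step is to repackage this shifted inequality into the canonical form (\ref{eq:OmegaBound}) in $K''$. Since the spatial component of $\Lambda(\vec{v})(b',\vec{0})$ has magnitude $\gamma |b'||\vec{v}|$, a single application of the triangle inequality to the displaced spatial norm gives $\mathrm{Im}\ \omega'' \leq |\mathrm{Im}\ \vec{k}''| + b''$ with the new finite constant $b'' = \gamma |b'|(1+|\vec{v}|)$. This is exactly (\ref{eq:OmegaBound}) in $K''$, from which the original form (\ref{eq:SufCausality}) in $K''$ follows with any chosen cutoff radius.

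The main obstacle, compared to Theorem \ref{TheoremS1}, is that the pure time shift $b'$ acquires a spatial component under a boost, so the shifted future light cone no longer sits along the time axis in $K''$. This is precisely where the triangle inequality intervenes: it enlarges the constant by a Lorentz-factor-dependent amount but preserves the structural form of the inequality, which is all the statement of the theorem demands. I do not anticipate further technical difficulties; once the reformulation of (\ref{eq:OmegaBound}) in terms of a shifted light cone is in place, the remaining argument is essentially a one-line geometric manipulation that parallels the proof of Theorem \ref{TheoremS1}.
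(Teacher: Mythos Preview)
Your proposal is correct and, at the level of the underlying mechanism, equivalent to the paper's argument: both exploit that $(\mathrm{Im}\,\omega,\mathrm{Im}\,\vec{k})$ transforms as a real Minkowski 4-vector and that the open future light cone is preserved by orthochronous boosts, together with a triangle-type estimate involving $|\vec v|$. The difference is purely in packaging. The paper argues by contradiction: it negates (\ref{eq:OmegaBound}) in $K'$ (``for every $b'>0$ there is a violating mode''), boosts that violation back to $K$, and obtains $\mathrm{Im}\,\omega>|\mathrm{Im}\,\vec k|+\gamma b'(1-|\vec v|)$ for arbitrarily large $b'$, contradicting the fixed bound in $K$. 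You instead run the same Lorentz/triangle computation in the forward direction and read off an explicit constant $b''=\gamma|b'|(1+|\vec v|)$ in the new frame. Your constructive version has the minor advantage of producing the transformed constant directly, and your ``shifted light cone'' language makes the parallel with Theorem~\ref{TheoremS1} especially transparent; the paper's contradiction version is a touch shorter because it avoids tracking the spatial displacement of the apex. Either way, the content is the same inequality read in two directions.
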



\begin{proof}

We prove it by employing a contradiction approach again. Let us focus
on the inequality (\ref{eq:OmegaBound}). Suppose, by contradiction,
that (\ref{eq:OmegaBound}) holds in a IFR $K$, but is violated in
another IFR $K^{\prime}$, where $K^{\prime}$ moves with a velocity
$\vec{v}$ relative to $K$. Thus for any positive real constant $b^{\prime}>0$,
there exists a $\vec{k}^{\prime}\in\mathbb{C}^{3}$ such that $\mathrm{Im}\ \omega^{\prime}>|\mathrm{Im}\ \vec{k}^{\prime}|+b^{\prime}$
in frame $K^{\prime}$. Taking Lorentz transformation, we obtain $\mathrm{Im}\ \omega>|\mathrm{Im}\ \vec{k}|+\gamma b^{\prime}(1-|\vec{v}|)$,
which contradicts (\ref{eq:OmegaBound}) within frame $K$.  Therefore,
the frame $K^{\prime}$ does not exist. This completes the proof.
\end{proof}

In practice, one can verify the causality condition by taking $R\rightarrow\infty$
in inequality (\ref{eq:SufCausality}). In large $|\vec{k}|$ limit,
the typical dispersion relations exhibit behavior such as $\omega\propto\vec{v}_{0}\cdot\vec{k}+\mathcal{O}(|\vec{k}|^{0}),$
or $\omega\propto c_{0}(\vec{k}\cdot\vec{k})^{1/2}+\mathcal{O}(|\vec{k}|^{0})$,
where $c_{0}$ and $\vec{v}_{0}$ are real constants satisfying $|c_{0}|,|\vec{v}_{0}|\in[0,1]$.
 In this scenario, the sufficient criterion (\ref{eq:SufCausality})
holds, ensuring that the system maintains causality.

The\textcolor{red}{{} }new improved sufficient causality criterion (\textcolor{red}{\ref{eq:SufCausality}})
or (\ref{eq:OmegaBound}) is more stringent and preferable compared
to the conventional insufficient inequality (\textcolor{red}{\ref{eq:WrongC}}).
We show that the conventional criterion (\ref{eq:WrongC}) is automatically
fulfilled when the dispersion relations obey the inequality (\ref{eq:SufCausality})
in the Supplemental Material. However, the inequality (\ref{eq:SufCausality})
cannot be derived from (\ref{eq:WrongC}). For example, consider the
dispersion relation $\omega=k(1+i)/2$, which indeed satisfies the
inequality (\ref{eq:WrongC}), but does not obey the condition (\ref{eq:SufCausality}).
This specific case has been demonstrated to be acausal according to
Theorem 2.7 in Ref. \citep{Lax:2006Hyperbolic}. \textcolor{red}{}

The updated stability criterion (\ref{eq:NewBound}) and improved
causality criterion (\textcolor{red}{\ref{eq:SufCausality}}) or (\ref{eq:OmegaBound})
offer a straightforward way to reveal the profound relations between
stability and causality in linear mode analysis. Consequently, we
derive two corresponding conclusions.

\vspace{1em}

\paragraph*{Stability in all IFR means causality ---}

To illustrate the relationship between the stability and causality,
we depict the regions that satisfy the stability criterion (\ref{eq:NewBound})
and causality criterion (\ref{eq:OmegaBound}) in Fig. \ref{fig:Illustration-for-stability}.
Evidently, the causality criterion (\ref{eq:OmegaBound}) is inherently
satisfied when the stability criterion (\ref{eq:NewBound}) holds
true in all IFR. However, the reverse does not necessarily hold, e.g.
as presented by the blue region in Fig. \ref{fig:Illustration-for-stability}.
An example is the tachyon field equation \citep{Aharonov:1969vu},
whose dispersion relations fall inside the blue region (also see Supplemental
Material).\textcolor{blue}{{} }Therefore, stability across all IFR implies
causality, while causality does not necessarily entail stability.
To prevent any misinterpretation, it is crucial to highlight that
if the system has stability solely in specific IFR while being unstable
in others, the presence of stability alone does not guarantee causality.
The same conclusion has also been substantiated in Refs. \citep{Bemfica:2020zjp,Gavassino:2021owo}
through different approaches (also see a preliminary discussion in
Ref. \citep{Pu:2009fj}).

This finding is compatible with another significant observation in
Ref. \citep{Gavassino:2021kjm} referred to as ``thermodynamic stability
implies causality''. The conditions for thermodynamic stability delineated
as (i)-(iii) in Ref. \citep{Gavassino:2021kjm} maintain Lorentz invariance.
Clearly, if the thermodynamic stability holds true in a particular
IFR, it holds in any IFR. \textcolor{blue}{}


\vspace{0.5em}

\paragraph*{Stability and causality in one IFR implies stability and causality
across all IFR ---}

The subsequent theorem concerning stability-causality assessment across
various frames assists in mitigating the challenge in linear mode
analysis, where one is required to examine stability and causality
criteria in multiple frames. Similar conclusions have also been reported
in Refs. \citep{Bemfica:2020zjp,Gavassino:2021owo}, relying on the
premise of strong hyperbolicity or causality in all IFR.

\begin{theorem} \label{TheoremSCinMovingF}

If the system described by differential equation (\ref{eq:StandardFormFirstOrder})
exhibits stability and causality in one IFR, then it is stable and
causal in all IFR.

\end{theorem}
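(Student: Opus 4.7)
The plan is to read Theorem \ref{TheoremSCinMovingF} as an immediate combination of the two covariance results already established, on the understanding that ``stability'' refers to the updated criterion (\ref{eq:NewBound}) and ``causality'' to the improved criterion (\ref{eq:SufCausality}) (equivalently (\ref{eq:OmegaBound})). With this reading there is no new analytic work to do; the content of the theorem is a clean repackaging of Theorems \ref{TheoremS1} and \ref{TheoremCausOtherF}.

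Concretely, I would first note that stability in the given IFR $K$ means (\ref{eq:NewBound}) holds in $K$, so Theorem \ref{TheoremS1} immediately promotes this inequality to every IFR. The $\mathrm{Im}\,\vec{k}=0$ slice of (\ref{eq:NewBound}) is precisely the conventional spectral stability condition $\mathrm{Im}\,\omega\leq 0$ for real $\vec{k}$, so ordinary stability also holds in every IFR. Second, causality in $K$ means that (\ref{eq:SufCausality}) holds in $K$; Theorem \ref{TheoremCausOtherF} then propagates (\ref{eq:SufCausality}) (or its simplified form (\ref{eq:OmegaBound})) to every IFR. If one prefers the physical formulation, Theorem \ref{TheoremSuffCausa} applied frame by frame upgrades this spectral statement to subluminal propagation of finite-support initial data in every IFR. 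The conjunction of these two conclusions is exactly the claim of Theorem \ref{TheoremSCinMovingF}.

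The main point to watch, rather than a calculational obstacle, is the interpretation of the words ``stability'' and ``causality'' in the hypothesis. The conventional criteria (\ref{eq:ineq_orginal_01}) and (\ref{eq:WrongC}), imposed in a single IFR, are not strong enough to be propagated by the covariance machinery of Theorems \ref{TheoremS1} and \ref{TheoremCausOtherF}: the preceding paragraph explicitly notes that stability in one IFR need not entail stability in all IFR, and the acausal example $\omega=k(1+i)/2$ shows that (\ref{eq:WrongC}) is too weak even in a single frame to imply (\ref{eq:SufCausality}). The strengthened criteria of this paper are engineered precisely so that their single-IFR form is equivalent to their all-IFR form, which is what reduces Theorem \ref{TheoremSCinMovingF} to a one-line corollary and makes it useful in practice for relativistic hydrodynamic applications, where checking the criteria in every boosted frame would otherwise be prohibitive.
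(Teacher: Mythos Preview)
Your argument is internally consistent but rests on a reading of the hypothesis that the paper does \emph{not} adopt, and this is where the real content of Theorem \ref{TheoremSCinMovingF} lies. In the paper, ``stability'' in the hypothesis means the ordinary condition (\ref{eq:ineq_orginal_01}) (i.e.\ $\mathrm{Im}\,\omega\le 0$ for real $\vec{k}$), and ``causality'' means physical subluminal propagation. The paper's proof then invokes a genuine bridging lemma --- the $3{+}1$-dimensional extension of Theorem~2 of Ref.~\citep{Gavassino:2023myj}, proved in the Supplemental Material --- which says that conventional stability together with causality in a single IFR already forces the updated inequality (\ref{eq:NewBound}) to hold in that frame. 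Only after this step do Theorems \ref{TheoremS1}--\ref{TheoremCausOtherF} finish the job.

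By declaring at the outset that ``stability'' is to be understood as (\ref{eq:NewBound}), you bypass this lemma and reduce the theorem to a tautological restatement of Theorems \ref{TheoremS1} and \ref{TheoremCausOtherF}. That is a correct corollary, but a strictly weaker one: checking (\ref{eq:NewBound}) directly over all complex $\vec{k}$ is typically much harder than checking (\ref{eq:ineq_orginal_01}) via Routh--Hurwitz plus the asymptotic causality criterion (\ref{eq:SufCausality}), which is exactly the practical workflow the paper advertises immediately after the theorem. Your final paragraph even flags the issue, noting that the conventional criteria in a single frame are ``not strong enough to be propagated'' --- but the point of the theorem is precisely that, once paired with causality, conventional stability \emph{is} strong enough, and the missing ingredient is the lemma that upgrades (\ref{eq:ineq_orginal_01})$+$causality to (\ref{eq:NewBound}) in the given frame.
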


\begin{proof}

If differential equation (\ref{eq:StandardFormFirstOrder}) is stable
and causal in a particular IFR, then it follows that $\mathrm{Im}\ \omega\leq|\mathrm{Im}\ \vec{k}|$
within this frame (see Theorem 2 in Ref. \citep{Gavassino:2023myj}
for the $1+1$-dimensional case and the Supplemental Material for
the $3+1$-dimensional case). Consequently, combining Theorem \ref{TheoremS1}-\ref{TheoremCausOtherF}
presented in this work completes the proof. \end{proof}

Theorem \ref{TheoremSCinMovingF} provides a practical way for evaluating
stability and causality across all IFR. Initially, one selects a suitable
IFR, e.g. the rest frame in hydrodynamics. One can derive the stability
condition by analyzing conventional stability criterion (\ref{eq:ineq_orginal_01})
with the Routh-Hurwitz criterion \citep{Kovtun:2019hdm}, which is
often more straightforward than directly assessing stability criterion
(\ref{eq:NewBound}). Subsequently, the causality condition can be
verified using the causality criterion (\ref{eq:SufCausality}) as
$|\vec{k}|\rightarrow\infty$. To illustrate this point, we provide
an example concerning the stability and causality of MIS theory with
bulk viscous pressure only, presented in the Supplemental Material.
Interestingly, the calculations become straightforward in isotropic
systems by applying the theorem discussed in the coming paragraph.

\vspace{0.5em}

\paragraph*{Application: Asymptotic criterion for an isotropic system ---}

Given that numerous discussions revolve around causality and stability
within isotropic systems, such as the case of conventional relativistic
hydrodynamics in the rest frame, it is fitting for us to examine the
stability and causality conditions tailored for isotropic systems,
serving as a practical application. In an isotropic system, a simple
asymptotic criterion, as presented in the following theorem, becomes
a necessary condition for stability and a sufficient condition for
causality across all IFR.

\begin{theorem} \label{TheoremAsymp}

Considering $\vec{k}=k\hat{\mathrm{n}}$ where $k\in\mathbb{C}$,
$\hat{\mathrm{n}}\in\mathbb{C}^{3}$ and $\hat{\mathrm{n}}\cdot\hat{\mathrm{n}}=1$.
If the nonzero dispersion relations obtained from the equation (\ref{eq:PolynomialP})
are $\hat{\mathrm{n}}$-independent and satisfy inequality (\ref{eq:NewBound}),
then there exist only three asymptotic behaviors at $k\rightarrow\infty$,
\begin{eqnarray}
\omega & = & c_{1}k+d_{1}+\mathcal{O}(|k|^{a}),\label{eq:Iso_1}\\
\omega & = & c_{2}k^{-2m-1}+d_{2}k^{-2m-2}+\mathcal{O}(|k|^{a-2m-2}),\label{eq:Iso_2}\\
\omega & = & c_{3}k^{-2m}+\mathcal{O}(|k|^{a-2m}),\label{eq:Iso_3}
\end{eqnarray}
where $a<0$, $m=0,1,2,...$, and $c_{i},d_{i}$ are constants obeying
$c_{1}\in[-1,1]$, $\mathrm{Im}\ c_{2}=0$, $\mathrm{Im}\ c_{3}\leq0$,
$\mathrm{Im}\ d_{1,2}\leq0$.

\end{theorem}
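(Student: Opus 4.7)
The plan is to reduce the isotropic stability hypothesis to a one-variable inequality in $k$, apply Puiseux's theorem at $k=\infty$, and use that inequality to eliminate every asymptotic behavior except the three listed. First, the $\hat{n}$-independence forces each dispersion to be an algebraic function $\omega=\Omega(k^{2})$: the monic factor of $\mathcal{P}(\omega,\vec{k})$ whose $\omega$-roots are $\hat{n}$-independent on $\hat{n}\cdot\hat{n}=1$ must have coefficients that are polynomials in the $SO(3,\mathbb{C})$-invariant $\vec{k}\cdot\vec{k}=k^{2}$.

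Second, I would specialise (\ref{eq:NewBound}) to $\vec{k}=k\hat{n}$. Writing $\hat{n}=\vec{a}+i\vec{b}$ with $\vec{a}\cdot\vec{b}=0$ and $|\vec{a}|^{2}-|\vec{b}|^{2}=1$, a short calculation gives $|\mathrm{Im}(k\hat{n})|^{2}=(\mathrm{Im}\,k)^{2}+|\vec{b}|^{2}|k|^{2}\geq(\mathrm{Im}\,k)^{2}$, with equality for real $\hat{n}$. Hence the isotropic stability hypothesis sharpens to the one-variable inequality $\mathrm{Im}\,\omega(k)\leq|\mathrm{Im}\,k|$ for all $k\in\mathbb{C}$.

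Third, I would Puiseux-expand each branch of $\omega(k)$ at infinity, $\omega\sim ck^{\beta}$ with $\beta\in\mathbb{Q}$. Testing the inequality at $k=Re^{i\theta}$ excludes $\beta>1$ for $c\ne 0$, since a generic $\theta$ makes $R^{\beta}|c|\sin(\beta\theta+\arg c)$ grow faster than $R|\sin\theta|$. For $\beta=p/q$ in lowest terms with $q\geq 2$, the $q$ Galois-conjugate branches $c\zeta^{j}k^{\beta}$ ($\zeta=e^{2\pi i/q}$) must each obey $\mathrm{Im}\,\omega\leq 0$ at $k\in\mathbb{R}_{+}$, but $q\geq 2$ equispaced points cannot all lie in the closed lower half-plane, so $c=0$ and only integer $\beta$ survive. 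Evaluating the inequality at $\theta=0,\pi,\pi/2$ then pins down the leading coefficient: $\beta=1$ gives $c_{1}\in[-1,1]$, which is Case~1. For negative integer $\beta$, the $k\leftrightarrow -k$ pairing of modes imposed by $\omega=\Omega(k^{2})$ splits the cases by parity: odd $\beta=-(2m+1)$ gives paired modes $\pm c_{2}k^{\beta}$ at real $k$, forcing $c_{2}\in\mathbb{R}$ (Case~2), while even $\beta=-2m$ has a single mode and only $\mathrm{Im}\,c_{3}\leq 0$ is required (Case~3). The subleading constants $d_{1}$, $d_{2}$ and the order of the $O(|k|^{a})$ remainder follow by iterating the $\theta=0$ test on $\omega-ck^{\beta}$ and its lower-order residuals.

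The main obstacle I expect is the rigorous exclusion of fractional Puiseux exponents: I must argue carefully that the stability inequality evaluated at $k\in\mathbb{R}_{+}$ genuinely tests every Galois conjugate (so that the ``equispaced points cannot all lie in the lower half-plane'' argument has force), and that this exclusion extends cleanly to all subleading orders so that no exotic fractional-power correction can slip in beneath the integer leading term.
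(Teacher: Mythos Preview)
Your strategy matches the paper's closely: both reduce the isotropic hypothesis to the scalar bound $\mathrm{Im}\,\omega(k)\leq|\mathrm{Im}\,k|$ via the same $\hat{\mathrm{n}}$-calculation, expand $\omega\sim ck^{\beta}$ at infinity, and test the inequality along rays in $k$ to constrain $\beta$ and $c$. The paper phrases this as dominant balance and evaluates at $k=|k|e^{il\pi}$ for all integers $l$ (tracking a single branch continuously around the origin), obtaining $\sin(\beta l\pi+\theta_{c})\leq 0$ for every $l\in\mathbb{Z}$; you phrase it as Puiseux plus Galois conjugates at $k\in\mathbb{R}_{+}$. These are the same picture in different language, but the paper's version probes strictly more angles, and that difference is exactly where your argument breaks.

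Your integrality step has a concrete gap at $q=2$. Two antipodal points $c$ and $-c$ \emph{can} both lie in the closed lower half-plane---precisely when $c\in\mathbb{R}$---so the claim ``$q\geq2$ equispaced points cannot all lie in the closed lower half-plane'' is false for $q=2$, and half-integer exponents are not ruled out by testing the Galois conjugates at $k\in\mathbb{R}_{+}$ alone. The paper closes this by letting $l$ range over all integers, including odd $l$: for $\beta=p/2$ with $p$ odd one has $\{e^{i\beta l\pi}:l\in\mathbb{Z}\}=\{1,i,-1,-i\}$, and four equispaced points cannot all lie in a closed half-plane. Equivalently, in your own language, combine the $q$ Galois conjugates with the $k\leftrightarrow -k$ pairing you already invoke for the parity split: for half-integer $\beta$ this produces leading coefficients $\{c,-c,ic,-ic\}$ and the argument goes through. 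Once this is patched, your iteration to the subleading order must repeat the \emph{full} angle test, not just $\theta=0$, to force the next exponent to be an integer as well; the paper does exactly this and reads off the constraints on $d_{1},d_{2}$ from the same $\sin(\beta l\pi+\theta_{d})\leq 0$ mechanism.
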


The proof of the above theorem follows the asymptotic analysis in
Ref. \citep{bender1999advanced} and in the Supplemental Material.
We emphasize that the three dispersion relations (\ref{eq:Iso_1})-(\ref{eq:Iso_3})
mentioned above are necessary conditions for stability. If the asymptotic
behaviors of an isotropic system do not adhere to these conditions,
it might be causal but must be unstable. Interestingly, the three
dispersion relations (\ref{eq:Iso_1})-(\ref{eq:Iso_3}) satisfy the
conventional causality criterion (\ref{eq:WrongC}). This observation
helps explain why the conventional causality criterion (\ref{eq:WrongC})
has been considered a necessary condition for a covariantly stable
and causal isotropic system for a long time.

\vspace{1em}

\paragraph*{Summary ---}

In this work we have investigated the updated stability criterion
and improved causality criterion for the $3+1$-dimensional relativistic
system across all IFR. Notably, our findings indicate that the previously
widely used causality criterion (\ref{eq:WrongC}) needs to be substituted
with the improved asymptotic criterion (\ref{eq:SufCausality}) or
(\ref{eq:OmegaBound}). Based on Theorem \ref{TheoremS1}-\ref{TheoremCausOtherF},
we reveal the underlying connection between stability and causality
in linear mode analysis. Stability in all IFR implies causality, while
causality alone does not necessarily require stability. Furthermore,
if a system is stable and causal in one IFR, stability and causality
holds in all IFR. The findings alleviate the challenge of linear mode
analysis, which involves verifying the stability and causality conditions
across various frames. As an application, we also study the linear
stability and causality of the $3+1$-dimensional isotropic systems
and derive the new criterion (\ref{eq:Iso_1})-(\ref{eq:Iso_3}) that
are necessary for stability and sufficient for causality in all IFR.
Finally, it is important to emphasize that our theorems are model-independent
and can be applied to other relativistic systems beyond relativistic
hydrodynamics.


\vspace{0.5em}

\begin{acknowledgments}
We thank for J. Noronha, L. Gavassino, E. Grossi and P. Kovtun for
helpful discussions. This work is supported in part by the National
Key Research and Development Program of China under Contract No. 2022YFA1605500,
by the Chinese Academy of Sciences (CAS) under Grants No. YSBR-088
and by National Nature Science Foundation of China (NSFC) under Grants
No. 12075235 and No. 12135011. While this work was being completed,
we were informed of Ref. \citep{Hoult:2023clg} which works on a similar
topic and appeared on arXiv on the same day.
\end{acknowledgments}

\bibliographystyle{h-physrev}

\onecolumngrid
\newpage
\setcounter{equation}{0}
\setcounter{theorem}{0}
\renewcommand{\theequation}{S\arabic{equation}}
\begin{center}
\begin{center}
\textbf{\large Stability and causality criteria in linear mode analysis: stability means causality  \\ Supplemental Material }\\[.2cm]
 Dong-Lin Wang,$^{1}$ and Shi Pu $^{1}$ \\[.1cm]
{\itshape ${}^1$Department of Modern Physics, University of Science and Technology of China, Anhui 230026, China}
\end{center}
\setcounter{page}{1} 
\par\end{center}

\section{Causality, stability and related topics studied in other methods}

\subsection{Stability and two point Green functions \label{sec:ProofStabilityBound}}

We now follow Ref. \citep{Heller:2022ejw} to demonstrate that (\ref{eq:NewBound})
holds for any complex vector $\vec{k}$. Consider a retarded two point
Green function $G^{R}(x,y)$ for certain operators. We take $G^{R}(0,y)$
as a tempered distribution \citep{Heller:2022ejw}.  The causality
requires $G^{R}(0,y)$ is only nonzero in the past closed light cone
containing the boundary, i.e. $y^{0}\leq-|\vec{y}|<0$. The Fourier
transform of $G^{R}(0,y)$ is given by,
\begin{eqnarray}
\tilde{G}^{R}(\omega,\vec{k}) & \equiv & \int_{y^{0}\leq-|\vec{y}|<0}d^{4}yG^{R}(0,y)\times\exp[(\mathrm{Im}\ \omega)y^{0}\nonumber \\
 &  & -(\mathrm{Im}\ \vec{k})\cdot\vec{y}-i(\mathrm{Re}\ \omega)y^{0}+i(\mathrm{Re}\ \vec{k})\cdot\vec{y}].\label{eq:G_R}
\end{eqnarray}
Let us search for the analytic region for $\tilde{G}^{R}(\omega,\vec{k})$.
It means that the exponent in the right handed side of (\ref{eq:G_R})
must be suppressed. We find that 
\begin{eqnarray}
(\mathrm{Im}\ \omega)y^{0}-(\mathrm{Im}\ \vec{k})\cdot\vec{y} & \leq & (-\mathrm{Im}\ \omega+|\mathrm{Im}\ \vec{k}|)\cdot|\vec{y}|,\ \textrm{if }\mathrm{Im}\ \omega\geq0.
\end{eqnarray}
Once $\mathrm{Im}\ \omega>|\mathrm{Im}\ \vec{k}|$, it is obvious
that exponent is suppressed and $\tilde{G}^{R}(\omega,\vec{k})$ is
analytic. The dispersion relation $\omega=\omega(\vec{k})$ as the
non-analytic pole of $\tilde{G}^{R}(\omega,\vec{k})$ must be out
of region $\mathrm{Im}\ \omega>|\mathrm{Im}\ \vec{k}|$, i.e. it should
satisfy inequality (\ref{eq:NewBound}).

From the above discussion, we start from the requirement of causality
and get the stability condition. It implies that the stability and
causality are related.

\subsection{Extending theorem 2 in Ref. {[}24{]} to $3+1$ dimensional systems
\label{sec:3p1version}}

For completeness, here we write down the $3+1$ dimensional version
of Theorem 2 in Ref. \citep{Gavassino:2023myj}. The proof below is
very similar to the original proof for the $1+1$ dimensional propagation
in Ref. \citep{Gavassino:2023myj}.

\begin{theorem}

If a system described by the differential equations (\ref{eq:StandardFormFirstOrder})
is stable and causal in one IFR, then 
\[
\mathrm{Im}\ \omega\leq|\mathrm{Im}\ \vec{k}|,\quad\textrm{for }\vec{k}\in\mathbb{C}^{3},
\]
holds in this reference frame.

\end{theorem}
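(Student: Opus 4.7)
The plan is to mirror the 1+1-dimensional argument of Theorem 2 in Ref.~\citep{Gavassino:2023myj} (which is also the spirit of Section \ref{sec:ProofStabilityBound}) by studying the retarded Green's matrix of (\ref{eq:StandardFormFirstOrder}) and showing that its Fourier-Laplace transform is holomorphic on the open set $\{(\omega,\vec{k})\in\mathbb{C}^{4}:\mathrm{Im}\ \omega>|\mathrm{Im}\ \vec{k}|\}$. Since the dispersion relations from (\ref{eq:PolynomialP}) are precisely the singularities of this transform, they must lie in the complement of that set, which is the desired inequality.

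First I would define $G^{R}(t,\vec{x})$ as the $n\times n$ matrix-valued distribution satisfying $(\partial_{t}+\mathrm{M}(-i\partial))G^{R}=\delta(t)\delta^{3}(\vec{x})\mathbb{I}$ together with $G^{R}=0$ for $t<0$. Causality in the chosen frame implies the additional support condition $G^{R}(t,\vec{x})=0$ whenever $|\vec{x}|>t$, so the support of $G^{R}$ is contained in the closed forward light cone; this is simply the three-dimensional translation of the finite-propagation-speed property already invoked in the proof of Theorem \ref{TheoremSuffCausa}. Stability in the same frame controls the momentum-space evolution $e^{-\mathrm{M}(\vec{k})t}$ at real $\vec{k}$ by at most a polynomial factor in $t$, so $G^{R}$ is a tempered distribution.

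The crucial estimate then follows from
\begin{equation}
\tilde{G}^{R}(\omega,\vec{k})=\int_{0}^{\infty}\!dt\int_{|\vec{x}|\leq t}\!d^{3}x\,e^{i\omega t-i\vec{k}\cdot\vec{x}}G^{R}(t,\vec{x}).
\end{equation}
Using Cauchy--Schwarz on the light cone, $\mathrm{Im}\ \vec{k}\cdot\vec{x}\leq|\mathrm{Im}\ \vec{k}|\,|\vec{x}|\leq|\mathrm{Im}\ \vec{k}|\,t$, so the integrand is bounded by $e^{(-\mathrm{Im}\ \omega+|\mathrm{Im}\ \vec{k}|)t}$ times the polynomial growth allowed by stability. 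Whenever $\mathrm{Im}\ \omega>|\mathrm{Im}\ \vec{k}|$ this integral converges absolutely, and dominated convergence together with differentiation under the integral shows that $\tilde{G}^{R}$ is holomorphic on that region. On the other hand, Fourier transforming the defining equation gives $\tilde{G}^{R}(\omega,\vec{k})\propto[-i\omega+\mathrm{M}(\vec{k})]^{-1}$, whose singularities are the zeros of $\det[-i\omega+\mathrm{M}(\vec{k})]$ --- that is, of $\mathcal{P}(\omega,\vec{k})$ up to a nonzero factor --- and hence coincide with the dispersion relations $\omega_{i}(\vec{k})$. No branch can lie in the holomorphy region, yielding $\mathrm{Im}\ \omega_{i}(\vec{k})\leq|\mathrm{Im}\ \vec{k}|$ for every $\vec{k}\in\mathbb{C}^{3}$.

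The main technical obstacle I anticipate is making the two support statements for $G^{R}$ rigorous at the distributional level. Causality as stated in Theorem \ref{TheoremSuffCausa} applies to smooth, compactly supported initial data, and one must pass from that formulation to the distributional support of the fundamental solution by a mollification and limit argument around the point source, together with translation invariance. Verifying that each eigenvalue of $\mathrm{M}(\vec{k})$ actually produces a genuine (non-removable) singularity of the resolvent is routine spectral theory, but must be checked so that no dispersion branch slips past the bound.
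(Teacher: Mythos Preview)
Your Green's-function/Paley--Wiener argument is sound and will work, but it is not the route the paper takes for this theorem. The paper's proof (Supplemental Material, Sec.~\ref{sec:3p1version}) follows the cutoff-perturbation construction of Ref.~\citep{Gavassino:2023myj}: for a fixed $\vec{k}$ with $\mathrm{Im}\ \vec{k}\neq 0$, rotate coordinates so that $\mathrm{Im}\ \vec{k}=(|\mathrm{Im}\ \vec{k}|,0,0)$, truncate the exact plane-wave solution by a smooth step $\Theta(x)$ in that direction to obtain a bounded initial perturbation $\psi(0,\vec{x})$, and then use causality to argue that $\psi(t,\vec{x})=\varphi(t,\vec{x})$ along the ray $x=t+\epsilon$; stability of $\psi$ then forces $\mathrm{Im}\ \omega\leq|\mathrm{Im}\ \vec{k}|$. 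This is more elementary than your approach---no distributional machinery, no Paley--Wiener---and stays entirely within the notions of stability and causality already used in the main text. Your approach, by contrast, is the analytic-continuation philosophy of Ref.~\citep{Heller:2022ejw} (sketched separately in Sec.~\ref{sec:ProofStabilityBound}), upgraded to the matrix retarded propagator; it buys a cleaner conceptual picture (dispersion branches are poles of a function holomorphic on the forward tube) at the cost of the distributional technicalities you correctly flag. A small attribution slip: what you describe is \emph{not} the method of Theorem~2 in Ref.~\citep{Gavassino:2023myj}, which is precisely the cutoff argument the paper adopts.
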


\begin{proof}

For any complex $\vec{k}$, the function $\varphi(t,\vec{x})=\epsilon e^{i(\vec{k}\cdot\vec{x}-\omega t)}$
with a small constant $\epsilon>0$ is the solution to (\ref{eq:StandardFormFirstOrder}).
When $\mathrm{Im}\ \vec{k}=0$, the norm $|\varphi(0,\vec{x})|$ is
small for any $\vec{x}$. The stability requires that $\varphi(t,\vec{x})$
cannot grow exponentially, then $\mathrm{Im}\ \omega\leq0=|\mathrm{Im}\ \vec{k}|$.

When $\mathrm{Im}\ \vec{k}\neq0$, we can choose the coordinates such
that $\mathrm{Im}\ \vec{k}=(|\mathrm{Im}\ \vec{k}|,0,0)$. We find
$|\varphi(0,\vec{x})|$ is unbounded as $x\rightarrow-\infty$, so
$\varphi(0,\vec{x})$ cannot be viewed as a small perturbation. To
avoid this problem, we define \citep{Gavassino:2021owo,Gavassino:2023myj}
\begin{equation}
\psi(0,\vec{x})\equiv\Theta(x)\varphi(0,\vec{x}),
\end{equation}
where $\Theta(x)\geq0$ is a smooth function that takes the value
of $0$ when $x\leq-1$ and the value of $1$ when $x\geq0$, with
a smooth transition in the interval $[-1,0]$. Notice that the vector
norm $|\psi(0,\vec{x})|$ is always small, which means $\psi(0,\vec{x})$
is a small perturbation. Thus we can impose the stability condition
on the solution $\psi(t,\vec{x})$, i.e., $\psi(t,\vec{x})$ cannot
grow exponentially for $t>0$. On the other hand, since $\psi(0,\vec{x})=\varphi(0,\vec{x})$
for $x\geq0$, the causality requires that \citep{Gavassino:2023myj}
\begin{equation}
|\psi(t,\vec{x})|_{x=t+\epsilon>t}=|\varphi(t,\vec{x})|_{x=t+\epsilon>t}\propto e^{-(|\mathrm{Im}\ \vec{k}|-\mathrm{Im}\ \omega)t}.
\end{equation}
Then the stability requirement gives $\mathrm{Im}\ \omega\leq|\mathrm{Im}\ \vec{k}|$.
\end{proof}

\subsection{Discussion on the conventional causality criterion \label{subsec:Relation-between-the}}

Here we will show that the conventional causality criterion (\ref{eq:WrongC})
can be deduced from the improved causality criterion (\ref{eq:SufCausality}).
The following theorem is the key to prove it.

\begin{theorem}

Let $\vec{k}=k\hat{\mathrm{n}}$ with $k\in\mathbb{C}$, $\hat{\mathrm{n}}\in\mathbb{R}^{3}$
and $\hat{\mathrm{n}}\cdot\hat{\mathrm{n}}=1$. If the nonzero dispersion
relations derived from the equation (\ref{eq:PolynomialP}) satisfy
inequality (\ref{eq:SufCausality}), then we have the asymptotic behavior
at $k\rightarrow\infty$: 
\begin{eqnarray}
\omega & = & C(\hat{\mathrm{n}})k+\mathcal{O}(|k|^{0}),\label{eq:CausalAsymB_1}
\end{eqnarray}
where $C(\hat{\mathrm{n}})\in[-1,1]$.

\end{theorem}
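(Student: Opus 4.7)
The plan is to exploit that $\mathcal{P}(\omega,k\hat{\mathrm{n}})=0$ is a polynomial equation in $\omega$ and $k$, so every branch $\omega(k)$ is an algebraic function. By Puiseux's theorem, each branch admits an asymptotic expansion at $k=\infty$ of the form $\omega = c\,k^{\alpha}+\text{subleading}$ with $\alpha\in\mathbb{Q}$ and $c\in\mathbb{C}$. I would show: (i) the leading exponent satisfies $\alpha\le 1$, and (ii) when $\alpha=1$ the coefficient $c$ is real and lies in $[-1,1]$. Sub-linear cases $\alpha<1$ can then be absorbed into the $\mathcal{O}(|k|^{0})$ remainder with $C(\hat{\mathrm{n}})=0$.

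For (i) I would probe the hypothesis (\ref{eq:SufCausality}) along complex rays $k=Re^{i\theta}$ with $R$ large. Since $\hat{\mathrm{n}}\in\mathbb{R}^{3}$, one has $|\mathrm{Im}\,\vec{k}|=R|\sin\theta|$, while $\mathrm{Im}(c\,k^{\alpha})=|c|R^{\alpha}\sin(\alpha\theta+\arg c)$. Choosing $\theta_{*}$ so that $\sin(\alpha\theta_{*}+\arg c)=1$, the bound (\ref{eq:SufCausality}) reduces asymptotically to $|c|R^{\alpha}\le R|\sin\theta_{*}|+b$, which fails for $c\ne 0$ and any $\alpha>1$ as $R\to\infty$. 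This forces $\alpha\le 1$.

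For (ii), take $\alpha=1$ and write $c=c_{1}+ic_{2}$. Probing along the real axis with $k=\pm R$ gives $|\mathrm{Im}\,\vec{k}|=0$ and $\mathrm{Im}\,\omega\simeq\pm Rc_{2}$; applying (\ref{eq:SufCausality}) with both signs forces $c_{2}=0$. Probing next along the imaginary axis $k=\pm iR$ gives $|\mathrm{Im}\,\vec{k}|=R$ and $\mathrm{Im}\,\omega\simeq\pm Rc_{1}$, so $\pm Rc_{1}\le R+b$ yields $|c_{1}|\le 1$. Hence $c=C(\hat{\mathrm{n}})\in[-1,1]$, matching the claim. Combined with the observation that $\omega$ cannot diverge at finite $k$ (used earlier to pass from (\ref{eq:SufCausality}) to (\ref{eq:OmegaBound})), this gives the stated asymptotic form.

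The main obstacle is controlling the subleading Puiseux corrections along the chosen test rays: the direction $\theta_{*}$ must sit in a sector where the expansion converges and is single-valued, and the next-order terms must be shown to be $o(R^{\alpha})$ uniformly in $\theta$ near $\theta_{*}$, so that the constant margin $b$ is not overwhelmed as $R\to\infty$. Both facts follow from the uniform convergence of Puiseux series for $|k|$ above a threshold $R_{0}$, but careful bookkeeping of the finitely many branches (one per root of $\mathcal{P}(\omega,k\hat{\mathrm{n}})=0$) is where the argument requires the most attention.
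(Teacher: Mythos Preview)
Your approach mirrors the paper's: expand $\omega(k)$ asymptotically as $ck^{\alpha}+\cdots$ and probe the bound (\ref{eq:SufCausality}) along rays $k=Re^{i\theta}$. Your steps (i) and (ii) are essentially the paper's first and third steps. However, there is a genuine gap.

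Your claim that ``sub-linear cases $\alpha<1$ can be absorbed into the $\mathcal{O}(|k|^{0})$ remainder'' is false whenever $0<\alpha<1$: a leading term $ck^{1/2}$ with $c\ne 0$ is sub-linear but unbounded, hence not $\mathcal{O}(|k|^{0})$. The same issue recurs at the next order when $\alpha=1$: even after you have established $c\in[-1,1]$, the Puiseux series could continue as $\omega=ck+Dk^{\beta}+\cdots$ with $0<\beta<1$, and you have said nothing to rule this out. In both situations the conclusion $\omega=C(\hat{\mathrm{n}})k+\mathcal{O}(|k|^{0})$ would fail.

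The paper closes this gap with an additional argument (its second and fourth steps): one tests (\ref{eq:SufCausality}) along the family $\theta_{k}=l\pi$, $l\in\mathbb{Z}$, for which $|\mathrm{Im}\,k|=0$, obtaining $\sin(\alpha l\pi+\theta_{c})\le 0$ for every integer $l$. A short number-theoretic check (choosing $l=\lfloor -\theta_{c}/(\pi\alpha)\rfloor+1$) shows this is impossible for any noninteger $\alpha\in(0,1)$, eliminating fractional leading exponents. The identical reasoning, applied to the subleading exponent $\beta$ once $\alpha=1$ and $c\in\mathbb{R}$ are known, forces $\beta\le 0$. Your outline needs these two extra steps; as written, the proof is incomplete.
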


Obviously, the asymptotic behavior (\ref{eq:CausalAsymB_1}) obey
the conventional causality criterion (\ref{eq:WrongC}). Thus this
theorem implies that (\ref{eq:WrongC}) will automatically be satisfied
when the improved causality criterion (\ref{eq:SufCausality}) is
satisfied. In the following we give the detail of the proof of this
theorem.

\begin{proof}

By dominant balance \citep{bender1999advanced}, the nonzero asymptotic
solutions to the equation (\ref{eq:PolynomialP}) have the form 
\begin{equation}
\omega=C(\hat{\mathrm{n}})k^{\alpha}+\mathcal{O}(|k|^{\beta}),\quad\text{as }k\rightarrow\infty,
\end{equation}
where $\beta<\alpha$ and $C(\hat{\mathrm{n}})\neq0$ are independent
of $k$.

First, we prove that $\alpha\leq1$. From the inequality (\ref{eq:SufCausality})
we get 
\begin{eqnarray}
\mathrm{Im}[C(\hat{\mathrm{n}})k^{\alpha}+\mathcal{O}(|k|^{\beta})] & \leq & |\mathrm{Im}\ k|+b,\quad\textrm{for }|k|>R.
\end{eqnarray}
For any small $\epsilon>0$, there exists a large positive constant
$R_{1}$ such that $R_{1}\geq R$ and $\mathrm{Im}[\mathcal{O}(|k|^{\beta})]>-\epsilon|k|^{\alpha}$
for $|k|>R_{1}$. Then we have 
\begin{equation}
|k|^{\alpha}[|C(\hat{\mathrm{n}})|\sin(\alpha\theta_{k}+\theta_{c})-\epsilon]<|k||\sin\theta_{k}|+b,\quad\textrm{for }|k|>R_{1},\label{eq:CausalAsym}
\end{equation}
where $k=|k|e^{i\theta_{k}}$ and $C(\hat{\mathrm{n}})=|C(\hat{\mathrm{n}})|e^{i\theta_{c}}$.
Note that the angle $\theta_{k}$ is arbitrary and the parameter $\epsilon>0$
can be arbitrarily small. If $\alpha>1$, we can choose an appropriate
$\theta_{k}$ such that $|C(\hat{\mathrm{n}})|\sin(\alpha\theta_{k}+\theta_{c})-\epsilon>0$
and 
\begin{equation}
|k|^{\alpha}[|C(\hat{\mathrm{n}})|\sin(\alpha\theta_{k}+\theta_{c})-\epsilon]\gg|k||\sin\theta_{k}|+b,\quad\textrm{as }k\rightarrow\infty,
\end{equation}
which contradicts with inequality (\ref{eq:CausalAsym}). Hence $\alpha\leq1$
is a necessary condition.

Second, we prove that $\alpha\notin(0,1)$. Substituting $\theta_{k}=l\pi$
with $l=0,\pm1,\pm2,...$ into (\ref{eq:CausalAsym}), we find 
\begin{equation}
|k|^{\alpha}[|C(\hat{\mathrm{n}})|\sin(\alpha l\pi+\theta_{c})-\epsilon]<b,\quad\textrm{for }|k|>R_{1}.
\end{equation}
For $\alpha>0$, the above inequality leads to 
\begin{equation}
\sin(\alpha l\pi+\theta_{c})\leq0.\label{eq:CausalSineIneq}
\end{equation}
When $l=0$, we obtain $\mathrm{Im}\ C(\hat{\mathrm{n}})=|C(\hat{\mathrm{n}})|\sin\theta_{c}\leq0$.
Because the sine function is periodic, it is enough to consider the
range $\theta_{c}\in[-\pi,0]$ in the inequality (\ref{eq:CausalSineIneq}).
If $\alpha\in(0,1)$, we can choose 
\begin{equation}
l=\left\lfloor \frac{-\theta_{c}}{\pi\alpha}\right\rfloor +1,
\end{equation}
where $\left\lfloor x\right\rfloor $ is an integer obeying $x-1<\left\lfloor x\right\rfloor \leq x$
for $x\in\mathbb{R}$, such that $0<\alpha l\pi+\theta_{c}<\pi$ violating
the inequality (\ref{eq:CausalSineIneq}). Thus we have $\alpha\notin(0,1).$

Third, we show that $C(\hat{\mathrm{n}})\in[-1,0)\cup(0,1]$ if $\alpha=1$.
Let $\alpha=1$ in (\ref{eq:CausalAsym}). We find 
\begin{equation}
|C(\hat{\mathrm{n}})|\sin(\theta_{k}+\theta_{c})<|\sin\theta_{k}|+\epsilon+\frac{b}{|k|},\quad\textrm{for }|k|>R_{1}.
\end{equation}
Thus for any small constant $\epsilon^{\prime}>0$, there exists $R_{2}\geq R_{1}$
such that 
\begin{equation}
|C(\hat{\mathrm{n}})|\sin(\theta_{k}+\theta_{c})<|\sin\theta_{k}|+\epsilon^{\prime},\quad\textrm{for }|k|>R_{2}.
\end{equation}
Letting $\theta_{k}=\pi/2-\theta_{c},\pi,0$, we have 
\begin{equation}
|C(\hat{\mathrm{n}})|<|\sin\theta_{k}|+\epsilon^{\prime}<1+\epsilon^{\prime},\label{eq:CausAsyproof-1}
\end{equation}
and 
\begin{equation}
\pm|C(\hat{\mathrm{n}})|\sin\theta_{c}<\epsilon^{\prime}.\label{eq:CausAsyproof-2}
\end{equation}
The inequalities (\ref{eq:CausAsyproof-1},\ref{eq:CausAsyproof-2})
require that $C(\hat{\mathrm{n}})\in[-1,0)\cup(0,1]$.

Fourth, we prove that if $\alpha=1$, then the next leading order
of $\omega(\vec{k})$ is at most $\mathcal{O}(|k|^{0})$. The general
form of dispersion relation is given by 
\begin{equation}
\omega=C(\hat{\mathrm{n}})k+D(\hat{\mathrm{n}})k^{\beta}+\mathcal{O}(|k|^{\beta^{\prime}}),
\end{equation}
where $\beta^{\prime}<\beta<1$, and $D(\hat{\mathrm{n}})\neq0$ is
independent of $k$. Let $\theta_{k}=l\pi$ with $l=0,\pm1,\pm2,...$.
Using the inequality (\ref{eq:SufCausality}), we find that for any
small $\epsilon>0$, there exists $R_{3}>0$ satisfying 
\begin{equation}
|k|^{\beta}[|D(\hat{\mathrm{n}})|\sin(\beta l\pi+\theta_{d})-\epsilon]<b,\quad\text{for }|k|>R_{3}.
\end{equation}
Assuming that $0<\beta<1$, the above inequality requires 
\begin{equation}
\sin(\beta l\pi+\theta_{d})\leq0.\label{eq:CausalIneq_2}
\end{equation}
We immediately find $\sin\theta_{d}\leq0$. Then (\ref{eq:CausalIneq_2})
has the same structure as (\ref{eq:CausalSineIneq}). The argument
in the second step also applies to (\ref{eq:CausalIneq_2}). This
means (\ref{eq:CausalIneq_2}) cannot hold for $0<\beta<1$, that
is a contradiction. Therefore $\beta\leq0$. \end{proof}

\section{Key inequality (\ref{eq:Estimate1}) for causality criterion \label{sec:ProofInequality1}}

We prove inequality (\ref{eq:Estimate1}) here. For an $n\times n$
matrix, we introduce the spectral norm of matrix $||\mathrm{M}||=\sup_{|V^{\prime}|=1}|\mathrm{M}V^{\prime}|$
where $V^{\prime}$ is an arbitrary vector with $|V^{\prime}|=1$
\citep{Datta2010:NLinear}.

If $n=1$, we get $\mathrm{M}(\vec{k})=i\omega$ and 
\begin{equation}
|e^{-\mathrm{M}(\vec{k})t}|=e^{(\mathrm{Im}\ \omega)t},\quad\textrm{for }n=1.\label{eq:Appn1}
\end{equation}

Next, we consider the case $n>1$. For a fixed complex $\vec{k}$,
we have the identity (see chapter 7.3 of Ref. \citep{Evans2010PartialDE})
\begin{equation}
e^{-\mathrm{M}(\vec{k})t}=\frac{1}{2\pi i}\int_{\mathcal{C}}dze^{-zt}[z-\mathrm{M}(\vec{k})]^{-1},
\end{equation}
where the path $\mathcal{C}$ is defined as follows. Draw circles
of radius $\epsilon\in(0,1)$ centered at $i\omega_{1},i\omega_{2},...,i\omega_{n}$,
which are the eigenvalues of the $n\times n$ matrix $\mathrm{M}(\vec{k})$.
Then $\mathcal{C}$ is defined as the boundary of the union set of
these circles, traversed anticlockwise. We find 
\begin{eqnarray}
|e^{-zt}| & \leq & \exp\left\{ \left[\max_{1\leq i\leq n}(\mathrm{Im}\ \omega_{i})+\epsilon\right]t\right\} ,\label{eq:AppDet}\\
|\det[z-\mathrm{M}(\vec{k})]| & = & \prod_{j=1}^{n}|z-i\omega_{j}|\geq\epsilon^{n},\quad\textrm{for }z\in\mathcal{C}.
\end{eqnarray}
The spectral norm of the inverse matrix can be evaluated through 
\begin{eqnarray}
||[z-\mathrm{M}(\vec{k})]^{-1}|| & = & \frac{||\mathrm{ad}[z-\mathrm{M}(\vec{k})]||}{|\det[z-\mathrm{M}(\vec{k})]|}\nonumber \\
 & \leq & \epsilon^{-n}||\mathrm{ad}[z-\mathrm{M}(\vec{k})]||,\quad\textrm{for }z\in\mathcal{C},\label{eq:AppInv}
\end{eqnarray}
where $\mathrm{ad}[z-\mathrm{M}(\vec{k})]$ stands for the adjugate
matrix of $z-\mathrm{M}(\vec{k})$. Note that the spectral norm is
not greater than Frobenius norm \citep{Datta2010:NLinear}, i.e.,
\begin{equation}
||\mathrm{ad}[z-\mathrm{M}(\vec{k})]||\leq\left(\sum_{i,j=1}^{n}|\{\mathrm{ad}[z-\mathrm{M}(\vec{k})]\}_{ij}|^{2}\right)^{1/2}.
\end{equation}
In the following we estimate the matrix element $|\{\mathrm{ad}[z-\mathrm{M}(\vec{k})]\}_{ij}|$,
which equals the absolute value of the determinant of the $(n-1)\times(n-1)$
matrix obtained by deleting the $j$-th row and $i$-th column from
$z-\mathrm{M}(\vec{k})$.

To proceed further, we now show that $z\in\mathcal{C}$ is bounded
by 
\begin{equation}
|z|\leq\mathrm{const}\cdot(1+|\vec{k}|^{N}).\label{eq:Appz}
\end{equation}
Since the elements in $\mathrm{M}(\vec{k})$ fulfill 
\begin{equation}
|\mathrm{M}_{ij}(\vec{k})|\leq\mathrm{const}\cdot(1+|\vec{k}|^{N}),\label{eq:inequality_temp}
\end{equation}
 each eigenvalue $i\omega_{j}$ of $\mathrm{M}(\vec{k})$ is bounded
by $|\omega_{j}|\leq\mathrm{const}\cdot(1+|\vec{k}|^{N})$.  Recalling
that $|z|\leq|\omega_{j}|+1$, we then obtain inequality (\ref{eq:Appz}).

The inequalities, (\ref{eq:Appz}) and (\ref{eq:inequality_temp}),
give us 
\begin{equation}
|\{\mathrm{ad}[z-\mathrm{M}(\vec{k})]\}_{ij}|\leq\mathrm{const}\cdot(1+|\vec{k}|^{N(n-1)}),\quad\textrm{for }z\in\mathcal{C}.
\end{equation}
Hence,
\begin{eqnarray}
||\mathrm{ad}[z-\mathrm{M}(\vec{k})]|| & \leq & \left(\sum_{i,j=1}^{n}|\{\mathrm{ad}[z-\mathrm{M}(\vec{k})]\}_{ij}|^{2}\right)^{1/2}\nonumber \\
 & \leq & \mathrm{const}\cdot(1+|\vec{k}|^{N(n-1)}),\quad\textrm{for }z\in\mathcal{C}.\label{eq:AppAdjNorm}
\end{eqnarray}
According to (\ref{eq:AppDet},\ref{eq:AppInv},\ref{eq:AppAdjNorm}),
we find
\begin{eqnarray}
||[z-\mathrm{M}(\vec{k})]^{-1}|| & \leq & \epsilon^{-n}||\mathrm{ad}[z-\mathrm{M}(\vec{k})]||\nonumber \\
 & \leq & \mathrm{const}\cdot\epsilon^{-n}(1+|\vec{k}|^{N(n-1)}),\quad\textrm{for }z\in\mathcal{C}.
\end{eqnarray}
 and, then, obtain 
\begin{eqnarray}
||e^{-\mathrm{M}(\vec{k})t}|| & = & \frac{1}{2\pi}||\int_{\mathcal{C}}dze^{-zt}[z-\mathrm{M}(\vec{k})]^{-1}||\nonumber \\
 & \leq & \mathrm{const}\cdot\frac{2\pi\epsilon n}{2\pi}\epsilon^{-n}(1+|\vec{k}|^{N(n-1)})\exp\left\{ \left[\max_{1\leq i\leq n}(\mathrm{Im}\ \omega_{i})+\epsilon\right]t\right\} \nonumber \\
 & \leq & \mathrm{const}\cdot\epsilon^{1-n}(1+|\vec{k}|^{N(n-1)})\exp\left\{ \left[\max_{1\leq i\leq n}(\mathrm{Im}\ \omega_{i})+\epsilon\right]t\right\} ,\quad\textrm{for }n>1.\label{eq:AppNormn2}
\end{eqnarray}
Combining (\ref{eq:Appn1}) and (\ref{eq:AppNormn2}), we derive the
inequality (\ref{eq:Estimate1}). We emphasize that the $\epsilon\in(0,1)$
is arbitrary, and the ``const'' in (\ref{eq:AppNormn2}) is independent
of $t$, $\epsilon$, $\omega$, and $\vec{k}$.

\section{Three examples for the applications of our new theorems}

\subsection{Example 1: Tachyon field equation is causal but unstable \label{subsec:TachyonCausal}}

Tachyon field equation is similar to the Klein-Gordon field equation
but with an imaginary mass $im_{0}$ \citep{Aharonov:1969vu}, i.e.,
\begin{equation}
(\partial_{t}^{2}-\nabla^{2}-m_{0}^{2})\varphi(x)=0.\label{eq:TachyonEq}
\end{equation}
This equation can be converted into the form (\ref{eq:StandardFormFirstOrder}),
so we can use the results shown in this work to study the stability
and causality of Tachyon field.

The dispersion relations of (\ref{eq:TachyonEq}) read
\begin{equation}
\omega=\pm(\vec{k}^{2}-m_{0}^{2})^{1/2}.\label{eq:DispersionTachyon}
\end{equation}
Obviously, the Tachyon field equation is unstable since one mode has
positive imaginary part as $|\vec{k}|\rightarrow0$.

We then discuss the causality. From (\ref{eq:DispersionTachyon})
we can get  
\begin{eqnarray}
(\mathrm{Re}\ \omega)^{2}-(\mathrm{Im}\ \omega)^{2} & = & (\mathrm{Re}\ \vec{k})^{2}-(\mathrm{Im}\ \vec{k})^{2}-m_{0}^{2},\\
(\mathrm{Re}\ \omega)(\mathrm{Im}\ \omega) & = & (\mathrm{Re}\ \vec{k})\cdot(\mathrm{Im}\ \vec{k}),
\end{eqnarray}
which leads to 
\begin{eqnarray}
(\mathrm{Im}\ \omega)^{2} & \leq & (\mathrm{Im}\ \vec{k})^{2}-\frac{1}{2}(|\vec{k}|^{2}-m_{0}^{2})+\frac{1}{2}\sqrt{(|\vec{k}|^{2}+m_{0}^{2})^{2}-4m_{0}^{2}(\mathrm{Re}\ \vec{k})^{2}}\nonumber \\
 & \leq & (\mathrm{Im}\ \vec{k})^{2}+m_{0}^{2}.
\end{eqnarray}
Thus $\mathrm{Im}\ \omega\leq|\mathrm{Im}\ \vec{k}|+m_{0}$ satisfies
the improved causality criterion (\ref{eq:SufCausality}) and its
equivalent version (\ref{eq:OmegaBound}), indicating that the Tachyon
field equation (\ref{eq:TachyonEq}) is causal indeed.

In fact, there is a simpler way to show that the Tachyon field equation
is causal. Since (\ref{eq:TachyonEq}) is isotropic, the dispersion
relations are independent of the direction of $\vec{k}$. It is enough
to study the $1+1$ dimensional propagation. Letting $\vec{k}=(k,0,0)$
in (\ref{eq:DispersionTachyon}), the asymptotic behavior at $k\rightarrow\infty$
is given by 
\begin{equation}
\omega=\pm k+\mathcal{O}(|k|^{-1}),
\end{equation}
which satisfies the asymptotic behaviors in Theorem \ref{TheoremAsymp}.
For any $\epsilon>0$, we can find a large number $R$ such that $\mathrm{Im}[\mathcal{O}(|k|^{-1})]<\epsilon$
for $|k|>R$. Therefore $\mathrm{Im}\ \omega<|\mathrm{Im}\ k|+\epsilon$
for $|k|>R$, and then the Tachyon field equation is causal according
to the causality criterion (\ref{eq:SufCausality}).

\subsection{Example 2: Stability and causality of simplified MIS theory with
bulk viscosity \label{subsec:SCMIS}}

Let us consider the simplified MIS theory with bulk viscous pressure
only. The energy-momentum tensor is given by \citep{Israel:1979}
\begin{equation}
\Theta^{\mu\nu}=eu^{\mu}u^{\nu}-(p+\Pi)\Delta^{\mu\nu},\quad\partial_{\mu}\Theta^{\mu\nu}=0,\label{eq:EMT_bulk}
\end{equation}
where the constitutive equation for bulk viscosity $\Pi$ is 
\begin{equation}
\tau_{\Pi}u^{\mu}\partial_{\mu}\Pi+\Pi=-\zeta\partial_{\mu}u^{\mu}.\label{eq:MISbulk}
\end{equation}
The variables $e$, $p$, $u^{\mu}$, $\tau_{\Pi}\neq0$, and $\zeta\neq0$
represent energy density, pressure, velocity of a fluid cell, relaxation
time for bulk viscous pressure, and bulk viscosity coefficient, respectively.
The projection tensor $\Delta^{\mu\nu}$ is defined as $\Delta^{\mu\nu}\equiv\eta^{\mu\nu}-u^{\mu}u^{\nu}$
with $\eta^{\mu\nu}=\mathrm{diag}\{+,-,-,-\}$. In the following,
we employ the linear mode analysis and characteristic method to investigate
the stability and causality of the simplified MIS theory. 

\subsubsection{Linear stability and causality conditions via linear mode analysis}

We now analyze the linear stability and causality of the simplified
MIS theory through linear mode analysis. We analyze the conventional
stability criterion (\ref{eq:ineq_orginal_01}) combing the Theorem
\ref{TheoremAsymp} in rest frame and extend it across all IFR by
applying Theorem \ref{TheoremSCinMovingF}. 

Consider small perturbations atop an irrotational equilibrium state,
e.g., $\delta e$, $\delta\Pi$, and $\delta u^{\mu}$. The evolution
of these perturbations is governed by the conservation equation $\partial_{\mu}\Theta^{\mu\nu}=0$
and constitutive equation (\ref{eq:MISbulk}). To linear order, we
obtain  
\begin{eqnarray}
u^{\mu}\partial_{\mu}\delta e+(e+p)\partial_{\mu}\delta u^{\mu} & = & 0,\label{eq:LinearMEq1}\\
(e+p)u^{\mu}\partial_{\mu}\delta u^{\alpha}-c_{s}^{2}\Delta^{\mu\alpha}\partial_{\mu}\delta e-\Delta^{\mu\alpha}\partial_{\mu}\delta\Pi & = & 0,\label{eq:LinearMEq2}\\
\tau_{\Pi}u^{\mu}\partial_{\mu}\delta\Pi+\delta\Pi+\zeta\partial_{\mu}\delta u^{\mu} & = & 0,\label{eq:LinearMEq3}
\end{eqnarray}
where $c_{s}^{2}\equiv\frac{dp}{de}\neq0$ is the speed of sound.
In the rest frame where $u^{\mu}=(1,\vec{0})$, the equations (\ref{eq:LinearMEq1}-\ref{eq:LinearMEq3})
become isotropic. In such case, we only need to consider 1+1 dimensional
propagation. Assuming the perturbations are proportional to $e^{-i\omega t+ikx}$,
the nonzero dispersion relations are the solutions to  
\begin{equation}
i\tau_{\Pi}\omega^{3}-\omega^{2}-i(\overline{\zeta}+c_{s}^{2}\tau_{\Pi})\omega k^{2}+c_{s}^{2}k^{2}=0,\label{eq:DisEqMIS}
\end{equation}
with $\overline{\zeta}\equiv\zeta/(e+p)$. 

Based on Eq. (\ref{eq:DisEqMIS}), it is straightforward to derive
the stability and causality conditions in the rest frame. For stability,
the Routh-Hurwitz criterion \citep{Kovtun:2019hdm} directly yields
the following conditions  
\begin{equation}
\tau_{\Pi}>0,\ \zeta>0,\ c_{s}^{2}>0.\label{eq:LinearSCMIS}
\end{equation}
For causality, we note that the solutions to (\ref{eq:DisEqMIS})
exhibit the following asymptotic behaviors as $k\rightarrow\infty$,
\begin{eqnarray}
\omega & = & \pm k\sqrt{c_{s}^{2}+\frac{\overline{\zeta}}{\tau_{\Pi}}}-\frac{i\overline{\zeta}}{2\tau_{\Pi}(\overline{\zeta}+c_{s}^{2}\tau_{\Pi})}+\mathcal{O}(k^{-1}),\label{eq:AsymMIS1}\\
\omega & = & -\frac{ic_{s}^{2}}{\overline{\zeta}+c_{s}^{2}\tau_{\Pi}}+\mathcal{O}(k^{-1}),\label{eq:AsymMIS2}
\end{eqnarray}
Then we can straightforwardly obtain the causality condition \cite{Pu:2009fj}
\begin{equation}
0\leq c_{s}^{2}+\frac{\zeta}{\tau_{\Pi}(e+p)}\leq1,\label{eq:LinearCCMIS}
\end{equation}
from (\ref{eq:AsymMIS1}-\ref{eq:AsymMIS2}) by applying the criterion
(\ref{eq:SufCausality}) in Theorem \ref{TheoremSuffCausa} (also
see Eqs. (\ref{eq:Iso_1}-\ref{eq:Iso_3}) in Theorem \ref{TheoremAsymp}
and Sec. \ref{subsec:Relation-between-the}).

When considering the stability and causality in other IFR, there are
no additional conditions required beyond conditions (\ref{eq:LinearSCMIS})
and (\ref{eq:LinearCCMIS}). For causality, Theorem \ref{TheoremCausOtherF}
asserts that (\ref{eq:LinearCCMIS}) is sufficient to guarantee causality
in any IFR. For stability, however, we emphasize that condition (\ref{eq:LinearSCMIS})
alone can not ensure stability across all IFR. Instead, the intersection
of conditions (\ref{eq:LinearSCMIS}) and (\ref{eq:LinearCCMIS})
establish the conditions for covariant stability, as pointed out by
Theorem \ref{TheoremSCinMovingF}.

\subsubsection{Nonlinear causality condition via characteristic method}

The above analysis focus on the linear regime. In contrast, we now
discuss the nonlinear causality using the characteristic method \citep{Floerchinger:2017cii,Bemfica:2020zjp}.

Without linear order approximation, the full hydrodynamic equations
(\ref{eq:EMT_bulk}, \ref{eq:MISbulk}) can be put in the matrix form
 
\begin{equation}
\mathcal{M}(\partial)\Phi=\mathcal{N}(\Phi),
\end{equation}
with $\Phi=(e,u^{\nu},\Pi)^{\mathrm{T}}$, $\mathcal{N}(\Phi)=(0,0,-\Pi)^{\mathrm{T}}$,
and 
\begin{equation}
\mathcal{M}(\partial)=\left[\begin{array}{ccc}
u^{\mu}\partial_{\mu} & (e+p+\Pi)\partial_{\nu} & 0\\
-c_{s}^{2}\Delta^{\alpha\mu}\partial_{\mu} & (e+p+\Pi)\delta_{\nu}^{\alpha}u^{\mu}\partial_{\mu} & -\Delta^{\alpha\mu}\partial_{\mu}\\
0 & \zeta\partial_{\nu} & \tau_{\Pi}u^{\mu}\partial_{\mu}
\end{array}\right].
\end{equation}
Causality demands that the roots of $\det[\mathcal{M}(\psi)]=0$,
i.e., $\psi^{\mu}=(\psi^{0}(\psi^{i}),\psi^{i})$, satisfies the constraints:
(i) $\psi^{\mu}$ is real and (ii) $\psi^{\mu}\psi_{\mu}\leq0$ \citep{Floerchinger:2017cii}.

We will derive the nonlinear causality conditions based on the constraints
mentioned above. The characteristic determinant $\det[\mathcal{M}(\psi)]$
is given by  
\begin{eqnarray}
\det[\mathcal{M}(\psi)] & = & \det\left[\begin{array}{ccc}
u^{\mu}\psi_{\mu} & (e+p+\Pi)\psi_{\nu} & 0\\
-c_{s}^{2}\Delta^{\alpha\mu}\psi_{\mu} & (e+p+\Pi)\delta_{\nu}^{\alpha}u^{\mu}\psi_{\mu} & -\Delta^{\alpha\mu}\psi_{\mu}\\
0 & \zeta\psi_{\nu} & \tau_{\Pi}u^{\mu}\psi_{\mu}
\end{array}\right]\nonumber \\
 & = & (e+p+\Pi)^{3}B^{4}\left\{ \tau_{\Pi}(e+p+\Pi)B^{2}+[\tau_{\Pi}c_{s}^{2}(e+p+\Pi)+\zeta]V^{2}\right\} ,
\end{eqnarray}
with $B\equiv u_{\nu}\psi^{\nu}$ and $V^{\mu}\equiv\Delta^{\mu\nu}\psi_{\nu}$.
Then the equation $\det[\mathcal{M}(\psi)]=0$ leads to 
\begin{equation}
B=0,\quad B^{2}=-V^{2}\left[c_{s}^{2}+\frac{\zeta}{\tau_{\Pi}(e+p+\Pi)}\right].
\end{equation}
Note that $V^{2}\leq0$ since $V^{\mu}$ is not time-like. Thus $\psi^{\mu}=(\psi^{0}(\psi^{i}),\psi^{i})$
is real for any real $\psi^{i}$ if 
\begin{equation}
c_{s}^{2}+\frac{\zeta}{\tau_{\Pi}(e+p+\Pi)}\geq0.
\end{equation}
For the constraint $\psi^{\mu}\psi_{\mu}\leq0$, we find 
\begin{eqnarray}
\psi^{\mu}\psi_{\mu} & = & B^{2}+V^{2}\nonumber \\
 & = & V^{2}\left[1-c_{s}^{2}-\frac{\zeta}{\tau_{\Pi}(e+p+\Pi)}\right]\leq0,
\end{eqnarray}
which gives,
\begin{equation}
c_{s}^{2}+\frac{\zeta}{\tau_{\Pi}(e+p+\Pi)}\leq1.
\end{equation}

In summary, the nonlinear causality condition is given by \citep{Floerchinger:2017cii}
\begin{equation}
0\leq c_{s}^{2}+\frac{\zeta}{\tau_{\Pi}(e+p+\Pi)}\leq1.\label{eq:NonlinearCMIS}
\end{equation}
As expected, when the system reaches an equilibrium state, i.e., $\Pi\rightarrow0$,
the nonlinear causality condition (\ref{eq:NonlinearCMIS}) reduces
to the linear one (\ref{eq:LinearCCMIS}).

\subsection{Example 3: General expression for causal and stable asymptotic dispersion
relations in isotropic systems \label{sec:ProofTheoremAsym}}

Here we discuss the general expression for causal and stable asymptotic
dispersion relations in isotropic systems. The results have been shown
in Theorem 5. By dominant balance \citep{bender1999advanced}, one
can find that the asymptotic solutions to (\ref{eq:PolynomialP})
with $\vec{k}=k\hat{\mathrm{n}}$ are given by 
\begin{equation}
\omega=Ck^{\alpha}+\mathcal{O}(|k|^{\beta}),\ \textrm{as }k\rightarrow\infty,\label{eq:AsympSol}
\end{equation}
where $\alpha,C$, and $\beta<\alpha$ are independent of $k$ and
$\hat{\mathrm{n}}$ in isotropic cases.  We assume $C\neq0$ from
now on.

First, we prove that 
\begin{equation}
\mathrm{Im}\ \omega\leq|\mathrm{Im}\ k|,\label{eq:ProofIsoBound}
\end{equation}
holds for any $\hat{\mathrm{n}}$ in isotropic cases. Let $k=k_{R}+ik_{I}$
and $\hat{\mathrm{n}}=\hat{\mathrm{n}}_{R}+i\hat{\mathrm{n}}_{I}$,
where $k_{R},k_{I},\hat{\mathrm{n}}_{R},\hat{\mathrm{n}}_{I}$ are
real, and 
\begin{equation}
\hat{\mathrm{n}}_{R}\cdot\hat{\mathrm{n}}_{I}=0,\ |\hat{\mathrm{n}}_{R}|^{2}-|\hat{\mathrm{n}}_{I}|^{2}=1.
\end{equation}
Then 
\begin{eqnarray}
|\mathrm{Im}\ \vec{k}| & = & |\mathrm{Im}\ k\hat{\mathrm{n}}|\nonumber \\
 & = & |k_{I}\hat{\mathrm{n}}_{R}+k_{R}\hat{\mathrm{n}}_{I}|\nonumber \\
 & = & \sqrt{k_{I}^{2}+(k_{R}^{2}+k_{I}^{2})|\hat{\mathrm{n}}_{I}|^{2}}\nonumber \\
 & \geq & |k_{I}|=|\mathrm{Im}\ k|,
\end{eqnarray}
where the equal sign holds for $\hat{\mathrm{n}}_{I}=0$. Because
$\omega$ is independent of $\hat{\mathrm{n}}$ in isotropic cases,
the inequality $\mathrm{Im}\ \omega\leq|\mathrm{Im}\ \vec{k}|$ leads
to (\ref{eq:ProofIsoBound}).

Second, we prove $\alpha\leq1$ and $\mathrm{Im}\ C\leq0$. For any
small number $\epsilon>0$, there exists a large enough number $R>0$
such that $\mathrm{Im}[\mathcal{O}(|k|^{\beta})]>-\epsilon|k|^{\alpha}$
for $\beta<\alpha$ and $|k|>R$. Then, the condition (\ref{eq:ProofIsoBound})
gives 
\begin{equation}
|k|^{\alpha}\left[|C|\sin(\alpha\theta_{k}+\theta_{c})-\epsilon\right]<|k||\sin\theta_{k}|,\label{eq:TheoremAsymp-1}
\end{equation}
for $|k|>R$, where $C=|C|e^{i\theta_{c}}$ and $k=|k|e^{i\theta_{k}}$.
Because $\theta_{k}$ is an arbitrary real number and $\epsilon$
can be arbitrarily small, we have $\alpha\leq1$ and $\mathrm{Im}\ C=|C|\sin\theta_{c}\leq0$.

Third, we prove that $\alpha$ is an integer. By letting $\theta_{k}=l\pi$
with $l=0,\pm1,\pm2,...$, the inequality (\ref{eq:TheoremAsymp-1})
becomes 
\begin{equation}
|C|\sin(\alpha l\pi+\theta_{c})-\epsilon<0,\label{eq:TheoremAsymp-2}
\end{equation}
for $|k|>R$. Since $\epsilon>0$ can be arbitrarily small, (\ref{eq:TheoremAsymp-2})
leads to 
\begin{equation}
\sin(\alpha l\pi+\theta_{c})\leq0,\label{eq:TheoremAsymp-3}
\end{equation}
which must hold for any integer $l$. (\ref{eq:TheoremAsymp-3}) requires
that $\alpha$ must be an integer. To show this, we suppose, by contradiction,
that $\alpha$ is not an integer. Because of the periodicity of the
sine function in (\ref{eq:TheoremAsymp-3}), we only need to consider
the following range,
\begin{equation}
\alpha\in(-1,0)\cup(0,1),\quad\theta_{c}\in[-\pi,0].
\end{equation}
Let
\begin{equation}
l=\pm\left\lfloor \frac{-\theta_{c}}{\pi|\alpha|}\right\rfloor \pm1,\label{eq:LValue}
\end{equation}
where $\left\lfloor x\right\rfloor $ is an integer obeying $x-1<\left\lfloor x\right\rfloor \leq x$
for $x\in\mathbb{R}$, and the sign ``$+$'' and ``$-$'' corresponds
to $\alpha\in(0,1)$ and $\alpha\in(-1,0)$, respectively.  With
(\ref{eq:LValue}), we find $0<\alpha l\pi+\theta_{c}<\pi$ and $\sin(\alpha l\pi+\theta_{c})>0$,
contradicting (\ref{eq:TheoremAsymp-3}). Thus, $\alpha$ is an integer.

Fourth, we derive the constraints for $C$ in different cases. If
$\alpha=0,-2,-4,-6,...$, the inequality (\ref{eq:TheoremAsymp-3})
implies $\mathrm{Im}\ C=|C|\sin\theta_{c}\leq0$. If $\alpha=-1,-3,-5,-7,...$,
the inequality (\ref{eq:TheoremAsymp-3}) implies $\mathrm{Im}\ C=|C|\sin\theta_{c}=0$.
The special case is $\alpha=1$. Substituting $\alpha=1$ into (\ref{eq:TheoremAsymp-1}),
we obtain 
\begin{equation}
|C|\sin(\theta_{k}+\theta_{c})-\epsilon<|\sin\theta_{k}|,
\end{equation}
for $|k|>R$. When $\theta_{k}=\pi/2-\theta_{c},\pi,0$, we have 
\begin{equation}
|C|<|\sin(\pi/2-\theta_{c})|+\epsilon\leq1+\epsilon,
\end{equation}
and 
\begin{equation}
\pm|C|\sin\theta_{c}\leq\epsilon.
\end{equation}
Since $\epsilon$ can be arbitrarily small and $C$ is nonzero, we
obtain $C\in[-1,0)\cup(0,1]$ for $\alpha=1$.

The last step is to derive the next leading order of (\ref{eq:AsympSol})
when $\alpha=1,-1,-3,...$. Assume that the next leading order is
$Dk^{\beta}$ with a nonzero constant $D$, i.e., 
\begin{equation}
\omega=Ck^{\alpha}+Dk^{\beta}+\mathcal{O}(|k|^{\beta^{\prime}}),\label{eq:NextLO}
\end{equation}
where $\mathrm{Im}\ C=0$ and $\beta^{\prime}<\beta<\alpha$. Let
$k=|k|e^{il\pi}$ with $l=0,\pm1,\pm2,...$, and $D=|D|e^{i\theta_{d}}$.
Imposing the condition $\mathrm{Im}\ \omega\leq|\mathrm{Im}\ k|$
on (\ref{eq:NextLO}), we get 
\begin{equation}
\sin(\beta l\pi+\theta_{d})\leq0,
\end{equation}
which has the same structure as (\ref{eq:TheoremAsymp-3}). The case
of $l=0$ gives $\mathrm{Im}\ D=|D|\sin\theta_{d}\leq0$. Then using
the same argument on the third step, one can show that $\beta$ must
be an integer, i.e., $\beta\in\{\alpha-1,\alpha-2,...\}$. 

Eventually, we list the general expression for causal and stable asymptotic
dispersion relations
\begin{eqnarray}
\omega & = & c_{1}k+d_{1}+\mathcal{O}(|k|^{a}),\nonumber \\
\omega & = & c_{2}k^{-2m-1}+d_{2}k^{-2m-2}+\mathcal{O}(|k|^{a-2m-2}),\nonumber \\
\omega & = & c_{3}k^{-2m}+\mathcal{O}(|k|^{a-2m}),
\end{eqnarray}
where $a<0$, $m=0,1,2,...$, and $c_{i},d_{i}$ are constants obeying
$c_{1}\in[-1,1]$, $\mathrm{Im}\ c_{2}=0$, $\mathrm{Im}\ c_{3}\leq0$,
$\mathrm{Im}\ d_{1,2}\leq0$.
\end{document}